\newcommand{\citep}[1]{\citeauthor{#1}~[\citeyear{#1}]}
\def\thmspace{0.0em}
\newtheorem{theorem}{\hspace{\thmspace}{\bf Theorem}\!}
\newtheorem{corollary}{\hspace{\thmspace}{\bf Corollary}\!}
\newtheorem{definition}{\hspace{\thmspace}{\bf Definition}\!}
\newenvironment{proof}{{\textit{Proof}.}}{\hfill$\Box$}
\def \iff{\;\Leftrightarrow\;}
\newcommand{\benumerate}{\hspace{-0.5in} \begin{enumerate}\topsep=0pt \parsep=0pt \itemsep=-3pt } 
\newcommand{\eenumerate}{\end{enumerate}}
\newcommand{\bitemize}{\begin{list}{$\bullet$}{\topsep=0pt \parsep=0pt \itemsep=1pt \leftmargin=10pt}}
\newcommand{\eitemize}{\end{list}}
\newcommand{\argmin}{\operatornamewithlimits{argmin}}
\newcommand{\D}{{\mathscr{D}}}	
\newcommand{\R}{{\mathscr{R}}}	
\newcommand{\cA}{\mathcal{A}}
\newcommand{\cE}{\mathcal{E}}
\newcommand{\cM}{\mathcal{M}}
\newcommand{\cN}{\mathcal{N}}
\newcommand{\cO}{{\mathcal{O}}}	%
\newcommand{\cin}{\bm{G}}
\newcommand{\bx}{\bm{x}}
\newcommand{\bv}{\bm{v}}
\newcommand{\RR}{\mathbb{R}}
\newcommand{\NN}{\mathbb{N}}
\newcommand{\CC}{{\mathbb{C}}}
\newcommand{\Lap}{{\text{Lap}}}
\title{Privacy-Preserving Obfuscation of Critical Infrastructure Networks\\(Extended Version)}
\author{
Ferdinando Fioretto \and
Terrence W.K.~Mak \And
Pascal Van Hentenryck
\affiliations
Georgia Institute of Technology\\
\emails
\{fioretto, wmak\}@gatech.edu,
pvh@isye.gatech.edu
}
\begin{document}
\maketitle\sloppy\allowdisplaybreaks

\begin{abstract} 
The paper studies how to release data about a critical infrastructure
network (e.g., a power network or a transportation network) without
disclosing sensitive information that can be exploited by malevolent
agents, while preserving the realism of the network. It proposes a
novel obfuscation mechanism that combines several privacy-preserving
building blocks with a bi-level optimization model to significantly
improve accuracy. The obfuscation is evaluated for both realism and
privacy properties on real energy and transportation
networks. Experimental results show the obfuscation mechanism
substantially reduces the potential damage of an attack exploiting the
released data to harm the real network.
\end{abstract}

\section{Introduction}

\emph{Critical Infrastructure Networks} (CINs), such as electrical power grids and public transportation networks, rely on the tight interaction of cyber and physical components. They play crucial roles in ensuring social and economic stability, and their operations require advanced machine-learning and optimization algorithms. For instance, power network operations perform a power flow computation every few minutes.

Research on CINs is highly dependent on the availability of realistic
test cases. However, the release of such datasets is challenging due
to privacy and national security concerns. For instance, the
electrical load of an industrial customer may indirectly reveal
sensitive information on its production levels and strategic
investments. Similarly, the ability to link the physical and cyber
locations of power generators can be used to launch coordinated
attacks on cyber-physical facilities, significantly damaging the
targeted network.

To mitigate these concerns, this paper develops an obfuscation
mechanism based on {\em Differential Privacy} (DP) \cite{dwork:06}, a
robust framework that bounds the privacy risks associated with
answering sensitive queries or releasing datasets. A DP algorithm
introduces carefully calibrated noise to the data to prevent the
disclosure of sensitive information. It is immune to \emph{linkage
attacks}--attacks in which one exploits auxiliary data to expose
sensitive information.

However, DP faces significant challenges when the resulting
privacy-preserving datasets are used as inputs to complex optimization
algorithms on CINs. For instance, direct applications of DP may impact
the realism of the original dataset or produce networks that do not
admit feasible solutions for the problem of interest
\cite{fioretto:CPAIOR-18}.

\begin{figure}[!t]
  \centering
  \fbox{\includegraphics[width=0.8\linewidth]{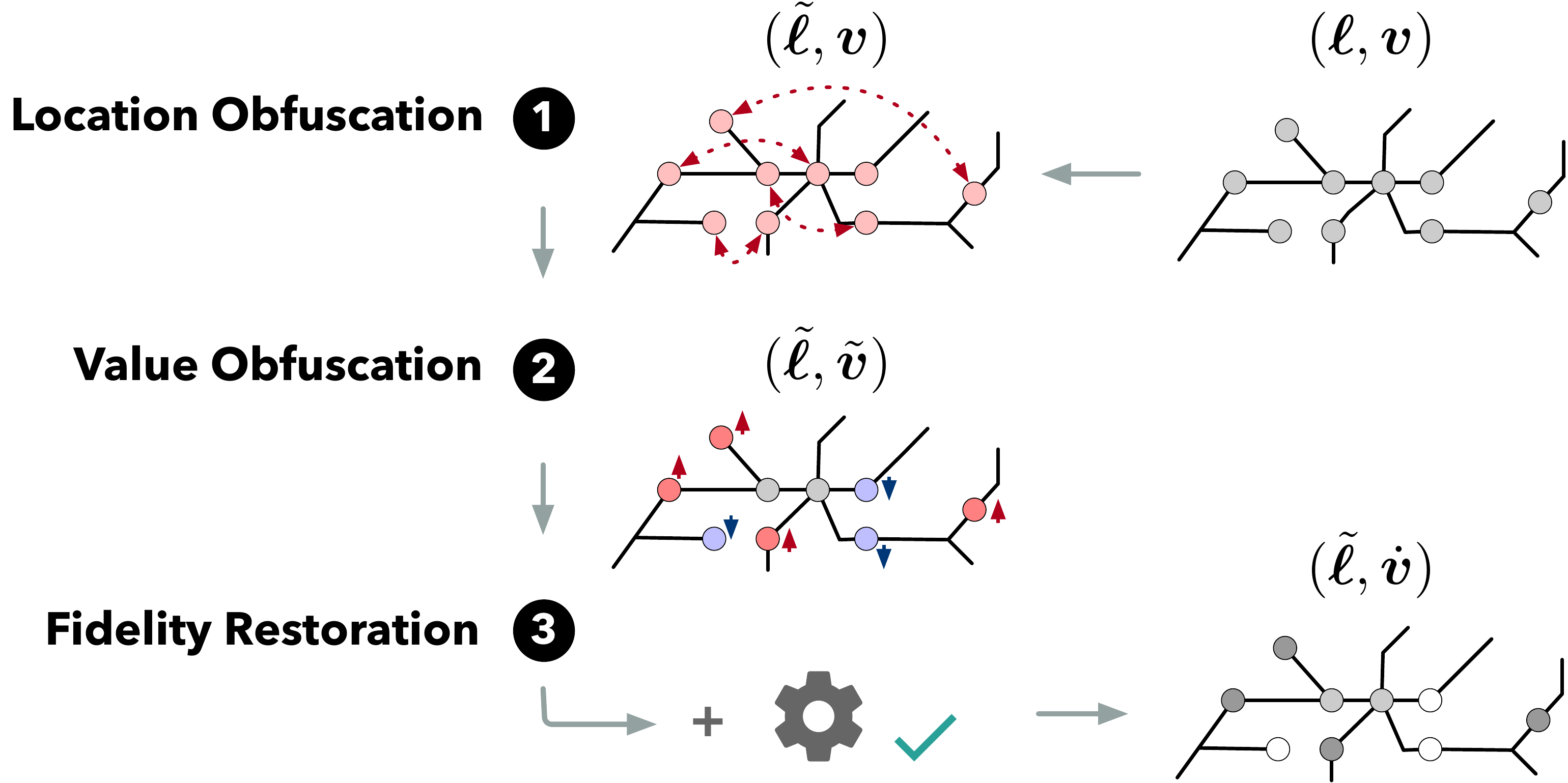}}
  \caption{The POCIN Framework
  }
  \label{fig:framework}
\end{figure}

The paper addresses this gap by introducing a \emph{Privacy-preserving
  Obfuscation mechanism for CINs} (POCINs), sketched in Figure
\ref{fig:framework}. POCIN takes as input a CIN and executes three
phases to (1) obfuscate the \emph{sensitive location} of network
elements, (2) hide \emph{sensitive values}, and (3) ensure the
satisfaction of important global properties of the CIN and the problem
of interest, through the use of optimization.

The paper makes the following contributions: (1) It proposes POCIN, a
novel privacy-preserving data release scheme that protects parameters
and locations of network elements; (2) POCIN uses a novel bi-level
optimization approach to preserve salient properties of the released
data and solves it with either exact or approximated methods; and (3)
It applies POCIN to two \emph{real} CINs from energy and
transportation networks. The paper shows that POCIN ensures strong
privacy guarantees and that, on the considered CINs, \emph{the damage
  inflicted on the real network, when the attacker exploits the
  obfuscated data, converges to that of a random uninformed attack} as
the privacy requirements increase.  


\section{Preliminaries: Differential Privacy}
\label{sec:differential_privacy}

\emph{Differential Privacy (DP)} is a framework used to protect the privacy of individuals in a dataset.
This notion has emerged as the de-facto standard for privacy
protection. A randomized mechanism $\cM \!:\! \D \!\to\! \R$ with
domain $\D$ and range $\R$ is $\epsilon$-differential private if, for
any output response $O \subseteq \R$ and any two \emph{neighboring}
inputs $D, D' \in \D$ differing in at most one individual (written
$D \sim D'$),
\begin{equation}
  \label{eq:dp_def} 
  Pr[\cM(D) \in O] \leq \exp(\epsilon)\, Pr[\cM(D') \in O],
\end{equation}

\noindent where the probability is calculated over the coin tosses of
$\cM$. The parameter $\epsilon \!>\! 0$ is the \emph{privacy loss} of the
algorithm, with values close to $0$ denoting strong privacy.  
DP satisfies several important
properties, including \emph{composability}, which ensures that a
combination of differentially private algorithms preserve differential
privacy, and \emph{immunity to post-processing}, which ensures that
privacy guarantees are preserved by arbitrary post-processing steps \cite{Dwork:13}.

In private data analysis settings, users interact with datasets by
issuing queries. A (numeric) \emph{query} is a function from a data
set $D \in \D$ to a result set $R \subseteq \RR^d$.  A query $Q$ can
be made differentially private by injecting random noise to its
output. The amount of noise depends on the \emph{sensitivity} of the
query, denoted by $\Delta_Q$ and defined as
\(
\Delta_Q = \max_{D \sim D'} \left\| Q(D) - Q(D')\right\|_1.
\)
In other words, the sensitivity of a query is the maximum
$l_1$-distance between the query outputs of any two neighboring
dataset $D$ and $D'$.

While the classical DP notion protects individuals from participating
into a dataset, many applications involve components whose {\em
presence} is public information. However, their values and (cyber)
locations are highly sensitive, as they may reveal, for instance, how
much power a generator is producing and where it is located, or the
most congested segments in transportation or logistics network.
Therefore the privacy goal of data curators is to protect observed
values and locations associated with these components.

The concept of \emph{indistinguishability} was introduced
by \citep{andres2013geo} to protect user locations in the Euclidean
plane and then generalized
in \cite{chatzikokolakis2013broadening,koufogiannis:15} to arbitrary
metric spaces.  Consider a dataset $D$ to which $n$ individuals
contribute their data $x_i$ and $\alpha > 0$. An adjacency relation
that captures the data variation of a single individual is defined as:
\begin{align*}
\label{eq:indist}
  D \sim_\alpha D' \iff \exists i: d(x_i, x_i') \leq \alpha \land\; \forall j \neq i: d(x_j, x_j') = 0
\end{align*}
where $d$ is a distance function on $\D$.  Such adjacency definition
is useful to hide individual participation up to some quantity
$\alpha$, or a location within a radius $\alpha$.
Given $\epsilon > 0, \alpha > 0$, a randomized mechanism $\cM : \D \to \R$ with domain $\D$ and range $\R$ is \emph{$\alpha$-indistinguishable $\epsilon$-DP} ($(\epsilon,\alpha$)-indistinguishable for short) if, for any event $O \subseteq \R$ and any pair $D \sim_\alpha D'$, ($D,D' \in \D$), Equation \eqref{eq:dp_def} holds.


\section{The CIN Obfuscation Problem}
\label{sec:obfuscation_problem}

\paragraph{Problem Setting}
\label{sec:problem_setting}

Consider a dataset $(\cal{N}, \cal{E})$ describing some critical
infrastructure network where $\cN \in \RR^{n \times p}$ and
$\cE \in \RR^{m \times q}$ describe the set of $n$ nodes and $m$
edges of the network, together with their attributes. The dataset
$(\cal{N}, \cal{E})$ is referred to as \emph{CIN description}.  An
element $n_i$ of $\cN$ is a $p$-dimensional vector describing the
attribute values associated with node $i$ and an element $e_{ij}$ of
$\cE$, is a $q$-dimensional vector describing the attribute values
associated with edge $(i,j)$ between nodes $i$ and $j$.  Furthermore,
consider an optimization problem $P$ that takes as input a CIN
description, denoted with $\cin$ for notational clarity:
\begin{equation}
  \begin{array}{rl}
    \cO^*(\cin) = \displaystyle\min_{\bx} 
    &\cO(\bx, \cin) \\
    \text{s.t.}
    &g_i(\bx, \cin) \leq 0 \;\; \forall i \in [k],
  \end{array}\!\!\!
  \label{c:bll}
  \end{equation}
where $\bx \!\in\! \RR^l$ is a vector of decision variables, $\cO$ is the objective function of $P$, and $g_i$ ($i \!\in\! [k]$) are the problem
constraints.

\begin{figure}[!t]
  \centering
  \includegraphics[width=0.65\linewidth]{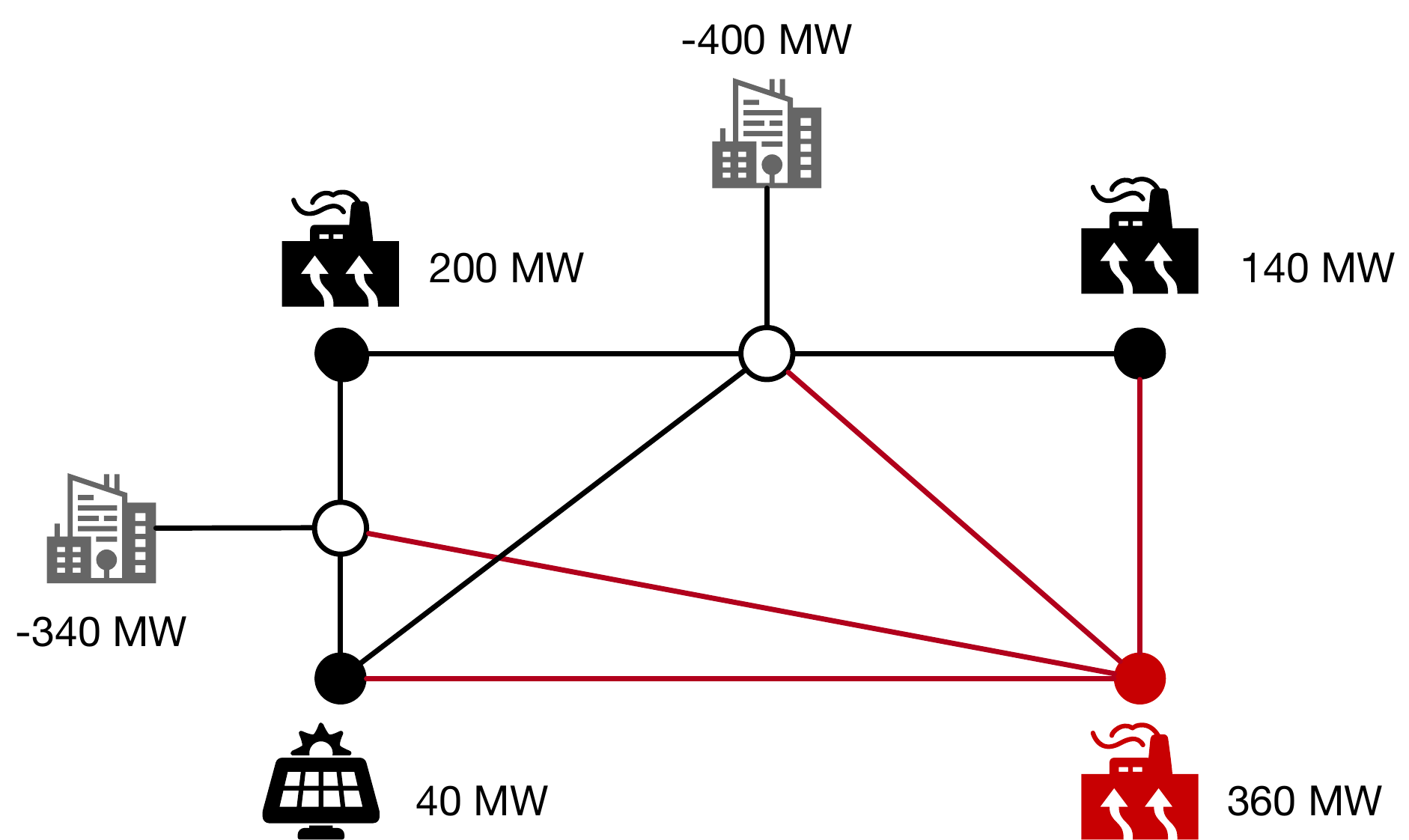}
  \caption{An example CIN with four nodes (power generators). The highlighted node is one being compromised by an attacker.}
  \label{fig:generator}
\end{figure}

To illustrate these concepts, consider the CIN in
Figure \ref{fig:generator}, representing a simplified \emph{power
network}. The nodes describe the network \emph{generators}
and \emph{loads} (represented as cities in the figure) with values
denoting the amount of power being injected into, or withdrawn from,
the power grid. The edges represent the transmission lines to carry
power from generators to loads. The optimization problem $P$ may be
the \emph{optimal power flow} (OPF) that amounts to finding the most
cost-effective generator dispatch to serve the load demands while
satisfying the physical constraints of the network. The data curator
desires to release a CIN description
$(\cal{\tilde{N}}, \cal{\tilde{E}})$ that \emph{obfuscates}
the \emph{location} and \emph{value} of some \emph{sensitive} network
elements, e.g., the locations and capacities of generators. 
The curator also wants to ensure that the OPFs on the released and
original CINs behave similarly (e.g., are feasible and have similar
costs) so that the released data is of high {\em fidelity}.

For notational simplicity, this paper focuses on
obfuscating \emph{sensitive nodes} of a CIN where each node
$n_i \in \cN$ is a pair $(\ell_i, v_i) \in \NN \times \RR$, describing
its location and salient parameter. The $n$-dimensional vectors of
locations and values are denoted by $\bm{\ell}$ and $\bm{v}$ and
$\cin$ is used to denote a CIN description.

\paragraph{Attack Model}
\label{sec:attack_model}

The paper considers an \emph{attack model} $\cA$ in which a malicious
agent can disrupt up to $b$ elements (called \emph{attack budget}) to
damage the network. Such action produces a new damaged network $\cin^a
= \cA(\cin)$ with $\cin^a \subseteq \cin$. For instance,
Figure \ref{fig:generator} highlights in red the generators affected
by an attack with a budget $b=1$, while $\cin^a$ is shown in black. The paper further assumes that the attacker has full knowledge of the
network topology and is capable of solving problem $P$ (e.g., to find
the generators with highest dispatch) and assessing the damages resulting
from an attack. The \emph{damage} of the attacker is measured using
the objective value difference $\cO^*(\cin^a) - \cO^*(\cin)$ for
problem $P$. For simplicity, the paper focuses on \emph{ranking
attacks}, i.e., attacks that disrupt the $b$ highest-ranked network
elements when their values are sorted according to an ordering
$\prec$ (e.g., the most dispatched generators), and evaluates
the released data against such attacks.

\paragraph{The CIN Obfuscation Problem}
\label{sec:CINOP}

This paper uses $\alpha$-indistinguishability to obfuscate the values
and locations of nodes. For a given $\alpha_\ell > 0$, the relation
$\sim_\ell \subseteq \NN^{2}$ captures indistinguishability between
node positions in the network, using a distance $d_\ell$ defined as a
function of the minimum number of nodes separating two nodes.
Similarly, for a given $\alpha_v > 0$ and distance function $d_v$,
relation $\sim_v \subseteq \RR^{2}$ captures the indistinguishability
between node values. These neighboring relations can be combined into
a new relation $\sim_{\ell v}$ that captures indistinguishability for
both values and locations. Two CIN descriptions $\cin \!=\!
(\bm{\ell}, \bv), \cin'\!=\!(\bm{\ell}', \bv') \in \NN \times \RR$ are
\emph{location and value indistinguishable} if and only if
\begin{equation}
    \cin \sim_{\ell v} \cin' \iff
    \ell \sim_\ell \ell' \lor \bv \sim_v \bv'.
\end{equation}
\noindent
A randomized mechanism $\cM$ is
$(\epsilon, \alpha_\ell, \alpha_v)$-indistinguishable if, for any pair
of $\sim_{\ell v}$-adjacent datasets, \Cref{eq:dp_def}
holds. 

The design of obfuscation mechanisms for CIN descriptions
should satisfy three desiderata, formalized through
the \emph{Privacy-preserving Obfuscation for CIN} (POCIN) problem:

\begin{definition}
  \label{def:pocin}
    Given a CIN description $\cin$, a problem $P$, and positive real values $\alpha_v, \alpha_\ell, \beta$ and $\epsilon$,
    the POCIN problem produces an obfuscated CIN description $\tilde{\cin} = (\tilde{\bm{\ell}}, \tilde{\bv})$ such that:
    \begin{enumerate}[leftmargin=*, parsep=0pt, itemsep=0pt, topsep=0pt]
      \item \emph{Privacy}: 
      $\tilde{\cin}$ satisfies $(\epsilon, \alpha_\ell, \alpha_v)$-indistinguishability. 

      \item \emph{Fidelity}: 
      $\tilde{\cin}$ admits a candidate solution $\bar{\bx}$ that \emph{(i)} satisfies the constraints $g_i(\bar{\bx}, \tilde{\cin})$
      of $P$ and  \emph{(ii)} is \emph{$\beta$-faithful} to the P's objective, i.e.,
      $\frac{| \cO(\bar{\bx},\tilde{\cin}) - \cO^*(\cin)|}  {\cO^*(\cin)} \leq \beta$.

      \item \emph{Robustness}:
       $\tilde{\cin}$ minimizes $|\cO^*(\cin) - \cO^*(\cA(\tilde{\cin}))|$, i.e., the damage inflicted by attack $\cA$.
    \end{enumerate}
  \end{definition}

\section{The POCIN Mechanism}
\label{sec:POCIN_mechanism}

The POCIN mechanism is divided in three phases, as illustrated in Figure \ref{fig:framework}:
\begin{enumerate}[leftmargin=*, parsep=0pt, itemsep=0pt, topsep=0pt]
  \item The \emph{location obfuscation} phase produces a new location indistinguishable CIN ${\cin}^\ell = (\tilde{\bm{\ell}}, \bv)$ from the original $\cin$. 

  \item The \emph{value obfuscation} phase takes ${\cin}^\ell$ as input and produces a new value indistinguishable CIN $\tilde{\cin} = (\tilde{\bm{\ell}}, \tilde{\bv})$. 

  \item The \emph{fidelity restoration} phase produces a new CIN description $\dot{\cin} = (\tilde{\bm{\ell}}, \dot{\bm{v}})$ that satisfies the problem constraints and is faithful to its objective.
\end{enumerate}

\subsubsection{Phase 1: Location Obfuscation} 

The first phase provides location indistinguishability. The idea is to shuffle the node locations using
an instance of the \emph{Exponential Mechanism} \cite{mcsherry:07}.
Such mechanism releases a privacy-preserving answer to a query by
sampling from its output discrete space $O$. The sampling probability
 is determined by a \emph{utility function} $u: (\D \times O) \to \RR $ that assigns a score to each output $o \in O$.

\begin{theorem}[Exponential Mechanism]
\label{th:m_exp}
  Let $u \!:\! (\D \times O) \!\to\! \RR$ with sensitivity 
  $
  \Delta_u \!=\! \max_{o \in O} \max_{D \sim D'} |u(D, o) \!-\! u(D', o)|.
  $ 
   The exponential mechanism that outputs $o$ with probability 
  $\Pr[o$ is selected $] \propto \exp \big( {\epsilon u(D, o)}/{2\Delta_u} \big)$
  satisfies $\epsilon-$DP.
\end{theorem}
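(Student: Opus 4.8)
The plan is to bound the privacy ratio $\Pr[\cM(D) = o]\,/\,\Pr[\cM(D') = o]$ pointwise and then lift it to events. First I would write the mechanism's output distribution explicitly as a normalized exponential over the discrete output space, namely $\Pr[\cM(D) = o] = \exp(\epsilon\, u(D,o)/2\Delta_u)\,/\,Z(D)$, where $Z(D) = \sum_{o' \in O} \exp(\epsilon\, u(D,o')/2\Delta_u)$ is the (dataset-dependent) partition function. The governing observation is that for neighboring $D \sim D'$ the ratio factors cleanly as
\[
\frac{\Pr[\cM(D) = o]}{\Pr[\cM(D') = o]} = \underbrace{\frac{\exp(\epsilon\, u(D,o)/2\Delta_u)}{\exp(\epsilon\, u(D',o)/2\Delta_u)}}_{\text{weight ratio}} \cdot \underbrace{\frac{Z(D')}{Z(D)}}_{\text{partition ratio}} .
\]
For the weight ratio, the sensitivity definition $|u(D,o) - u(D',o)| \le \Delta_u$ gives it immediately: the exponent is $\epsilon\big(u(D,o)-u(D',o)\big)/2\Delta_u \le \epsilon/2$, so the weight ratio is at most $\exp(\epsilon/2)$.

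I expect the partition ratio to be the main obstacle, since the normalizer $Z$ depends on the input and therefore does not simply cancel. The step I would carry out here is a termwise comparison: applying the same sensitivity bound to \emph{each} output $o' \in O$ yields $\exp(\epsilon\, u(D',o')/2\Delta_u) \le \exp(\epsilon/2)\,\exp(\epsilon\, u(D,o')/2\Delta_u)$, and summing over $o'$ shows $Z(D') \le \exp(\epsilon/2)\, Z(D)$. Hence the partition ratio is also bounded by $\exp(\epsilon/2)$. The only care needed is that this bound holds uniformly in $o'$, which is exactly what the $\max$ over $o$ in the definition of $\Delta_u$ guarantees.

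Multiplying the two factors gives the bound $\exp(\epsilon/2)\cdot\exp(\epsilon/2) = \exp(\epsilon)$; this also explains the factor $1/(2\Delta_u)$ in the exponent, since each of the two factors is responsible for exactly half of the privacy budget. Finally I would lift the pointwise estimate to an arbitrary event $S \subseteq O$ by summing over the discrete range, $\Pr[\cM(D) \in S] = \sum_{o \in S} \Pr[\cM(D) = o] \le \exp(\epsilon) \sum_{o \in S} \Pr[\cM(D') = o] = \exp(\epsilon)\,\Pr[\cM(D') \in S]$, which recovers \eqref{eq:dp_def} and establishes $\epsilon$-DP.
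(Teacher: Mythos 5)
Your proof is correct and is the canonical argument for the exponential mechanism: factor the pointwise ratio into a weight ratio and a partition-function ratio, bound each by $\exp(\epsilon/2)$ using the sensitivity $\Delta_u$ (applied termwise for the normalizer), multiply, and sum over the discrete range to lift the pointwise bound to events. The paper does not prove this theorem at all --- it is stated as a known result and attributed to McSherry and Talwar, and no proof appears in the appendix --- so there is nothing to compare against; your argument is the standard one from the literature and is sound, correctly explaining in particular why the $2\Delta_u$ in the exponent splits the privacy budget between the two factors.
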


POCIN uses a privacy-preserving \emph{shuffling function} that maps each node $i$ with location $\ell_i$ to a new location $\tilde{\ell}_i$ as:
\begin{equation}
\label{eq:exp_step2}
\tilde{\ell}_i \gets \ell_j \text{ with } \Pr \propto 
\exp\left( \frac{\epsilon\, d_\ell(\ell_i,\, \ell_j)}{2 \alpha_\ell} \right)
\end{equation}    
where $\ell_j$ is the location of a node $j$. This is an application of the Exponential mechanism with $\Delta_u$ being the sensitivity of distance function $d_\ell$. Since two
different locations may be mapped to the same location, POCIN solves
an \emph{assignment problem} so that each node $i$ is mapped to a
unique location. 

\begin{figure}[!t]
\centering
\includegraphics[width=0.65\linewidth]{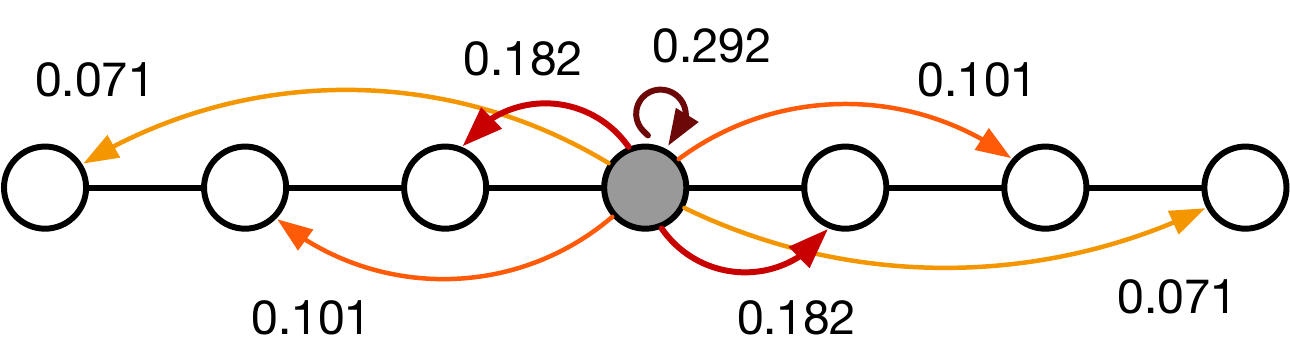}
\caption{Probabilities of location obfuscations ($\alpha_\ell\!=\!\epsilon\!=\!1.0$).\label{img:prob}}
\end{figure}

  \begin{corollary}
    \label{thm:phase2}
    Given $\alpha_\ell > 0$, Phase 1 of POCIN achieves $\alpha_\ell$-location-indistinguishability. 
  \end{corollary}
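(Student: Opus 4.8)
The plan is to recognize Phase~1 as a direct instance of the Exponential Mechanism of \Cref{th:m_exp} and to reduce the claim to a single sensitivity computation, after which immunity to post-processing disposes of the assignment step. Concretely, for each node $i$ the shuffling rule \eqref{eq:exp_step2} draws the obfuscated location $\tilde{\ell}_i$ from an output space $O$ of candidate positions with probability proportional to $\exp(\epsilon\, u(\bm{\ell}, o)/2\alpha_\ell)$, where the utility is taken to be $u(\bm{\ell}, o) = d_\ell(\ell_i, o)$. Matching this against \Cref{th:m_exp}, it suffices to show that the sensitivity of $u$ under the adjacency $\sim_\ell$ is at most $\alpha_\ell$, so that the constant $\alpha_\ell$ in the denominator of \eqref{eq:exp_step2} plays exactly the role of $\Delta_u$.

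First I would bound the sensitivity. Fix an output $o \in O$ and take two location configurations $\bm{\ell} \sim_\ell \bm{\ell}'$; by definition of $\sim_\ell$ they agree on every coordinate except a single node $i$, whose location satisfies $d_\ell(\ell_i, \ell_i') \le \alpha_\ell$. The triangle inequality for the metric $d_\ell$ then gives $|u(\bm{\ell}, o) - u(\bm{\ell}', o)| = |d_\ell(\ell_i, o) - d_\ell(\ell_i', o)| \le d_\ell(\ell_i, \ell_i') \le \alpha_\ell$, so $\Delta_u \le \alpha_\ell$. Substituting $\Delta_u = \alpha_\ell$ into \Cref{th:m_exp} shows that the draw of \eqref{eq:exp_step2} is $\epsilon$-DP with respect to $\sim_\ell$, which is precisely $(\epsilon, \alpha_\ell)$-location-indistinguishability for that node. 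Using $\alpha_\ell \ge \Delta_u$ in the denominator is in any case conservative, since a larger normalizer only shrinks the effective privacy loss.

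Next I would pass from a single draw to the released location vector. Because $\sim_\ell$ perturbs exactly one node, the guarantee I need is precisely the one protecting that node's source location, which the previous step supplies; the set $O$ of candidate positions is public and therefore unchanged when the private source $\ell_i$ varies, so the likelihood-ratio argument goes through cleanly. The final assignment problem that POCIN solves to make the map $i \mapsto \tilde{\ell}_i$ injective reads only the sampled locations and never re-inspects the private input $\bm{\ell}$; it is therefore a data-independent post-processing of the mechanism's output, and by immunity to post-processing \cite{Dwork:13} it preserves the $(\epsilon, \alpha_\ell)$ guarantee. Chaining these observations yields the corollary.

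The main obstacle I anticipate is the sensitivity/normalization bookkeeping and, more delicately, ensuring that the single-individual adjacency interacts correctly with the per-node application of the rule: one must verify that changing one node's source location shifts the utility of every competing output by at most $\alpha_\ell$ (handled by the triangle inequality above) and that, in the exponential-mechanism likelihood ratio, both the numerator factor and the partition-function factor are each bounded by $\exp(\epsilon/2)$, so that their product stays within $\exp(\epsilon)$. Pinning down that $O$ is the fixed public position set rather than a set that itself moves with $\ell_i$ is what makes this partition-function bound valid; everything else is the direct instantiation of \Cref{th:m_exp} followed by post-processing.
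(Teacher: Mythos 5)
Your proof is correct and follows essentially the same route as the paper's: instantiate the Exponential Mechanism (Theorem~\ref{th:m_exp}) with utility $d_\ell$, handle the per-node application via the single-changed-node structure of $\sim_\ell$ (what the paper invokes as parallel composition), and dispose of the assignment step by post-processing immunity. The paper's proof is a one-line citation of these three facts; you additionally supply the triangle-inequality sensitivity bound $\Delta_u \le \alpha_\ell$ and the observation that the candidate output set is public, both of which the paper leaves implicit.
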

\noindent
Figure \ref{img:prob} illustrates the intuition underlying the
mechanism: It displays the probabilities of moving the central node to
a specific location. For a sufficiently large $\alpha_\ell$, an
attacker is provided with no useful location information. Of course,
there is a tradeoff between $\alpha_\ell$ and the fidelity of the
network, which is addressed in Phase 3.

\subsubsection{Phase 2: Value Obfuscation}

The second phase of POCIN takes as input ${\cin}^\ell \!=\!
(\bm{\tilde{\ell}}, \bm{v})$ and constructs a privacy-preserving answer
$\tilde{\bm{v}}$ for the node values $\bm{v}$ using the Laplace
mechanism.
  
\begin{theorem}[Laplace Mechanism]
\label{th:m_lap} 
Let $Q$ be a numeric query that maps datasets to $\RR^d$. The Laplace mechanism that outputs $Q(D) + z$, where $z \in \RR^d$ is drawn from the Laplace distribution
$\textrm{\Lap}(\Delta_Q / \epsilon)^d$, achieves
$\epsilon$-DP.
\end{theorem}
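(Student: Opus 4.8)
The plan is to prove the privacy guarantee by bounding the pointwise ratio of the output densities on any two neighboring datasets and then integrating over an arbitrary event $O \subseteq \RR^d$. First I would write the density of the mechanism's output $\cM(D) = Q(D) + z$ at a point $y \in \RR^d$. Since each coordinate $z_i$ is an independent draw from $\Lap(\Delta_Q/\epsilon)$, whose density is $\tfrac{\epsilon}{2\Delta_Q}\exp(-\epsilon|t|/\Delta_Q)$, the joint density factorizes as
\begin{equation*}
p_D(y) = \prod_{i=1}^{d} \frac{\epsilon}{2\Delta_Q}\exp\!\left(-\frac{\epsilon\,|Q(D)_i - y_i|}{\Delta_Q}\right),
\end{equation*}
and analogously for $p_{D'}(y)$ with $Q(D')$ in place of $Q(D)$.

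Next I would form the ratio $p_D(y)/p_{D'}(y)$ for an arbitrary neighboring pair $D \sim D'$. The normalizing constants cancel, leaving an exponent proportional to $\sum_{i=1}^{d}\big(|Q(D')_i - y_i| - |Q(D)_i - y_i|\big)$. The key step is the reverse triangle inequality applied coordinatewise, $|Q(D')_i - y_i| - |Q(D)_i - y_i| \le |Q(D)_i - Q(D')_i|$, so that summing over $i$ turns the exponent into $\tfrac{\epsilon}{\Delta_Q}\|Q(D)-Q(D')\|_1$. Invoking the definition of sensitivity from the preliminaries, $\|Q(D)-Q(D')\|_1 \le \Delta_Q$, the ratio collapses to the bound
\begin{equation*}
\frac{p_D(y)}{p_{D'}(y)} \le \exp\!\left(\frac{\epsilon}{\Delta_Q}\,\|Q(D)-Q(D')\|_1\right) \le \exp(\epsilon).
\end{equation*}

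Finally I would lift this pointwise inequality $p_D(y) \le \exp(\epsilon)\, p_{D'}(y)$ to sets by integrating over any measurable $O \subseteq \RR^d$, which yields $\Pr[\cM(D)\in O] \le \exp(\epsilon)\Pr[\cM(D')\in O]$ — exactly the defining condition \Cref{eq:dp_def} of $\epsilon$-DP. I expect the only point requiring care to be the passage from the coordinatewise reverse triangle inequality to the $\ell_1$ bound: the argument works precisely because the sensitivity $\Delta_Q$ is measured in the $\ell_1$ norm, which is what makes the per-coordinate inequalities sum to $\|Q(D)-Q(D')\|_1$ and match the scale of the Laplace noise. Everything else is routine bookkeeping.
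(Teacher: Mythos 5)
Your proof is correct and is the canonical argument for the Laplace mechanism (as in Dwork and Roth): factorize the product density, bound the pointwise likelihood ratio via the reverse triangle inequality coordinatewise, sum to the $\ell_1$ sensitivity, and integrate over the event. The paper itself states this theorem as a known building block and provides no proof of it, so there is nothing to compare against; your derivation is complete, and your closing remark correctly identifies the one place where the choice of norm matters --- the $\ell_1$ sensitivity is exactly what lets the per-coordinate bounds telescope into $\exp(\epsilon)$.
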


\noindent
In the above, $\Lap(\lambda)$ denotes the Laplace distribution with 0
mean and scale $\lambda$, and $\Lap(\lambda)^d$ denotes the
i.i.d.~Laplace distribution over $d$ dimensions with parameter
$\lambda$. The Laplace mechanism with parameter
$\lambda=\alpha/\epsilon$ satisfies
$\alpha$-indistinguishability \cite{chatzikokolakis2013broadening}. As a result, the privacy-preserving values $\tilde{\bv}$ of
the CIN description are obtained as follows:
\begin{equation} \label{eq:setp1}
\tilde{\bm{v}} = \bm{v} + \text{Lap}(\alpha_v / \epsilon)^n.
\end{equation}
The Laplace mechanism has been shown to be \emph{optimal}: it
minimizes the mean-squared error for identity queries with respect to
the L1-norm \cite{koufogiannis:15}.
\begin{corollary}
  \label{thm:phase1}
  Given $\alpha_v > 0$, Phase 2 of POCIN achieves $\alpha_v$-value-indistinguishability. 
\end{corollary}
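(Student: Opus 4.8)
The plan is to reduce the claim directly to the Laplace Mechanism theorem (Theorem~\ref{th:m_lap}) by recognizing Phase 2 as an instance of the Laplace mechanism applied to the \emph{identity query} on the value vector, and then checking that the noise scale $\alpha_v/\epsilon$ used in \Cref{eq:setp1} matches the scale $\Delta_Q/\epsilon$ prescribed by that theorem for the value-adjacency relation $\sim_v$. Concretely, let $Q(\cin^\ell) = \bv$ be the query returning the value vector. Phase 2 releases $\tilde{\bv} = Q(\cin^\ell) + z$ with $z \sim \Lap(\alpha_v/\epsilon)^n$, so the only substantive thing to verify is that the sensitivity of $Q$ under $\sim_v$ equals $\alpha_v$.

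First I would compute this sensitivity. By the definition of the adjacency relation $\sim_\alpha$ specialized to values, two neighboring vectors $\bv \sim_v \bv'$ agree on every coordinate except possibly one index $i$, for which $d_v(v_i, v_i') \le \alpha_v$. Taking $d_v$ to be the $l_1$ distance on $\RR$, the $l_1$ distance between the two query answers collapses to a single surviving term:
\[
\Delta_Q = \max_{\bv \sim_v \bv'} \|Q(\bv) - Q(\bv')\|_1 = \max_{\bv \sim_v \bv'} \sum_{k} |v_k - v_k'| = \max_i d_v(v_i, v_i') \le \alpha_v.
\]
The crucial structural point is the clause $\forall j \neq i : d(x_j, x_j') = 0$ in the adjacency definition: it guarantees that only one coordinate contributes, so the sensitivity is bounded by $\alpha_v$ rather than by $n\,\alpha_v$, i.e.\ it does not scale with the dimension.

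Next I would invoke Theorem~\ref{th:m_lap} with $d = n$ and $\Delta_Q = \alpha_v$. Its conclusion is that releasing $Q(\cin^\ell) + \Lap(\Delta_Q/\epsilon)^n = \bv + \Lap(\alpha_v/\epsilon)^n$ achieves $\epsilon$-DP with respect to the adjacency relation used to define the sensitivity, namely $\sim_v$. Since this is exactly the mechanism of \Cref{eq:setp1}, and $\epsilon$-DP under $\sim_v$ is by definition $\alpha_v$-value-indistinguishability (\Cref{eq:dp_def} restricted to $\sim_v$-adjacent inputs), the corollary follows. (Equivalently, one may cite the stated fact that the Laplace mechanism with scale $\alpha/\epsilon$ is $\alpha$-indistinguishable, but the sensitivity route keeps the argument self-contained.) I would also note that only the value coordinates are perturbed while the locations $\tilde{\bm{\ell}}$ are fixed inputs carried over from Phase 1, so there is no interaction to account for with the location component of $\sim_{\ell v}$.

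I expect the main (and essentially only) obstacle to be the bookkeeping in the sensitivity step: one must be explicit that $d_v$ agrees with the $l_1$ norm underlying $\Delta_Q$ so that the single changed coordinate is bounded by $\alpha_v$, and that the single-index structure of $\sim_v$ is what prevents the sensitivity from growing with $n$. Once this alignment between $d_v$, the $l_1$ sensitivity, and the threshold $\alpha_v$ is pinned down, the remainder is an immediate appeal to Theorem~\ref{th:m_lap}.
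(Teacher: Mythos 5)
Your proposal is correct and takes essentially the same route as the paper: both identify Phase 2 as an instance of the Laplace mechanism with scale $\alpha_v/\epsilon$ and conclude via Theorem~2. The only difference is that you derive the sensitivity bound $\Delta_Q \le \alpha_v$ explicitly from the single-coordinate structure of $\sim_v$, whereas the paper's proof delegates this step by citing the result of Chatzikokolakis et al.\ that the Laplace mechanism with parameter $\alpha/\epsilon$ achieves $(\epsilon,\alpha)$-indistinguishability.
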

Phase 2 generates a CIN description $\tilde{\cin} =
(\tilde{\bm{\ell}}, \tilde{\bv})$ obfuscating locations and values
of the original network $\cin$. As a result, $\cin^v$
satisfies \emph{condition (1)} of the CIN obfuscation problem.

\subsubsection{Phase 3: Fidelity Restoration}

The noise introduced by Phases 1 and 2 may significantly alter the
structure of the data and the solutions to Problem $P$.  To restore
the fidelity of the network, POCIN leverages the post-processing
immunity of DP and uses a bi-level optimization problem to
redistribute the noise introduced in the previous phases.  The primary decision variables of the problem are the vector
$\dot{\bm{v}} = (\dot{v}_1, \ldots, \dot{v}_n)$ that represents the
post-processed node values after the noise redistribution.

The problem, called $P_{\text{BL}}$, is shown in Figure
\ref{fig:biopt}. It searches for a vector $\dot{\bm{v}}$ for which 
problem $P$ has an optimal solution $\bx^*$ and whose objective
value is close to the original optimum $\cO^*$ (assumed to be public
information). Moreover, the vector $\dot{\bm{v}}$ must be as close as
possible to obfuscated vector $\tilde{\bv}$, which is ensured by
objective \eqref{bi:obj}. Constraint \eqref{bi:subproblem} ensures
that $\bx^*$ is an optimal solution to problem
$\cO^*(\dot{\cin}=(\tilde{\bm{\ell}},\dot{\bv}))$, 
and Constraint
\eqref{bi:constr1} ensures the fidelity of the objective.

\begin{figure}[t]
\begin{mdframed}
  \vspace{-12pt}
    \begin{align}
    \!\!\!\!\!
  P_{\text{BL}} \!=\!
    \min_{(\dot{\bv}, \bx)} & \| \dot{\bv} - \tilde{\bv} \|_2 \tag{b1} \label{bi:obj} \\
    \text{s.t.:}\; & 
      | \cO(\bx^*, \dot{\bv}) - \cO^*| \leq \beta
    \tag{b2} \label{bi:constr1} \\
  &\!\!\!\begin{array}{rl} 
    \bm{x}^* \!=\! \argmin_{\bx} 
      &\!\!\!\!\cO(\bx, \dot{\bv}) \\
    \text{s.t.}\;
      &\!\!\!\!g_i(\bx, \dot{\bv}) \leq 0 \; \forall i \in [k]\\
  \end{array}
  \tag{b3} \label{bi:subproblem}
\end{align}
\vspace*{-10pt}
\end{mdframed}
  \caption{The bi-level optimization of the Fidelity Restoration.}
  \label{fig:biopt}
\end{figure}  

\begin{theorem}
POCIN is $(\epsilon, \alpha_p, \alpha_v)$-indistinguishable. 
\end{theorem}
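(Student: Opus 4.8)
The plan is to assemble the result from the two per-phase indistinguishability corollaries together with the post-processing immunity of DP, while exploiting the disjunctive structure of the adjacency relation $\sim_{\ell v}$ to obtain a privacy loss of $\epsilon$ rather than the $2\epsilon$ that naive sequential composition would suggest. Concretely, I would fix an arbitrary pair $\cin \sim_{\ell v} \cin'$ and recall that, by definition, either $\bm{\ell} \sim_\ell \bm{\ell}'$ (the locations are $\alpha_\ell$-adjacent while the values coincide) or $\bv \sim_v \bv'$ (the values are $\alpha_v$-adjacent while the locations coincide); this reading of the disjunction follows the single-component adjacency definition $\sim_\alpha$, in which only one coordinate of one node is perturbed and everything else is held fixed. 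The argument then splits into these two cases, and in each I would show that the end-to-end mechanism satisfies \Cref{eq:dp_def} with loss $\epsilon$.

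In the first case ($\bm{\ell} \sim_\ell \bm{\ell}'$, $\bv = \bv'$), Phase 1 is the only phase that ``sees'' a difference between the inputs. By \Cref{thm:phase2} its shuffling output is $\alpha_\ell$-location-indistinguishable, so the likelihood ratio of the released locations is bounded by $e^\epsilon$. Phase 2 adds Laplace noise drawn from $\Lap(\alpha_v/\epsilon)^n$ to the common value vector $\bv$ via \Cref{eq:setp1}; since this noise is independent of the shuffled locations and the underlying values are identical in $\cin$ and $\cin'$, Phase 2 induces exactly the same distribution on the value component in both branches and therefore contributes a factor of $1$ to the ratio. The second case ($\bv \sim_v \bv'$, $\bm{\ell} = \bm{\ell}'$) is symmetric: Phase 2 carries the entire $e^\epsilon$ factor by \Cref{thm:phase1}, while Phase 1 shuffles identical location vectors and contributes a factor of $1$.

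To combine the two phases formally I would observe that Phases 1 and 2 draw independent randomness and that the value-noising in \Cref{eq:setp1} does not depend on the Phase-1 output, so the joint law of $\tilde{\cin} = (\tilde{\bm{\ell}}, \tilde{\bv})$ factorizes as the product of the location law and the value law. The likelihood ratio of a product measure equals the product of the ratios, which is $e^\epsilon \cdot 1 = e^\epsilon$ in either case. Finally, Phase 3 computes $\dot{\cin} = (\tilde{\bm{\ell}}, \dot{\bv})$ by solving $P_{\text{BL}}$, whose only data-dependent input is the already-released $\tilde{\cin}$ (the optimum $\cO^*$ and the problem $P$ being public); hence it is a randomized post-processing map, and by immunity to post-processing the $e^\epsilon$ bound is preserved, establishing $(\epsilon, \alpha_\ell, \alpha_v)$-indistinguishability.

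The main obstacle I anticipate is making the ``transparent phase contributes a factor of $1$'' step fully rigorous: one must verify that the phase acting on the unchanged component (Phase 2 in the location case, Phase 1 in the value case) genuinely induces identical output distributions in both branches, including that no part of its output secretly depends on the perturbed coordinate. This independence is exactly what lets the joint likelihood ratio factorize and yields $\epsilon$ rather than $2\epsilon$; it is in essence a parallel-composition argument enabled by the disjunctive adjacency, and the care lies in checking the data flow between phases rather than in any heavy computation.
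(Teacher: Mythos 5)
Your proposal is correct and follows essentially the same route as the paper's own proof: the two per-phase indistinguishability corollaries, combined via the disjunctive adjacency relation of Equation (3), with Phase 3 handled by post-processing immunity. Your version is in fact more explicit than the paper's, which merely asserts the combination step, whereas you spell out the case analysis on the disjunction and the product-measure factorization that justifies a privacy loss of $\epsilon$ rather than $2\epsilon$; this is a welcome elaboration rather than a different approach.
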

 

\begin{theorem}
\label{theorem:factor2}
The error induced by POCIN on the CIN node values is bounded by the inequality:
  $
  \| \dot{\bv} - \bv \|_2 \leq 2
  \| \tilde{\bv} - \bv \|_2.
  $
\end{theorem}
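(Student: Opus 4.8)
The plan is to read Phase 3 as a projection of the noisy vector $\tilde{\bv}$ onto the feasible region of $P_{\text{BL}}$, and to exploit that the \emph{true} value vector $\bv$ already lies in that region. Once feasibility of $\bv$ is in hand, the factor-$2$ bound drops out of the optimality of $\dot{\bv}$ together with a single triangle inequality, with essentially no further computation.

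First I would verify that $\bv$ is feasible for $P_{\text{BL}}$. Set $\dot{\bv} = \bv$ and let $\bx^*$ be an optimal solution of the inner problem $P$ on this value vector; then Constraint \eqref{bi:subproblem} holds by construction, and since $\cO(\bx^*, \bv) = \cO^*$ equals the (public) original optimum, Constraint \eqref{bi:constr1} becomes $|\cO^* - \cO^*| = 0 \leq \beta$, which holds for every $\beta > 0$. Hence $\bv$ belongs to the feasible set over which $P_{\text{BL}}$ minimizes.

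Next, since $\dot{\bv}$ is optimal for $P_{\text{BL}}$ and objective \eqref{bi:obj} minimizes $\|\dot{\bv} - \tilde{\bv}\|_2$ over that set, feasibility of $\bv$ gives $\|\dot{\bv} - \tilde{\bv}\|_2 \leq \|\bv - \tilde{\bv}\|_2 = \|\tilde{\bv} - \bv\|_2$. The triangle inequality in the $\ell_2$ norm then closes the argument:
\[
\|\dot{\bv} - \bv\|_2 \;\leq\; \|\dot{\bv} - \tilde{\bv}\|_2 + \|\tilde{\bv} - \bv\|_2 \;\leq\; 2\,\|\tilde{\bv} - \bv\|_2 .
\]

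The step I expect to be delicate is the feasibility claim, because of a location mismatch: the inner problem \eqref{bi:subproblem} is posed over the \emph{obfuscated} locations $\tilde{\bm{\ell}}$, whereas $\cO^*$ is the optimum of the \emph{original} CIN $(\bm{\ell}, \bv)$. Thus $\cO(\bx^*, \bv) = \cO^*$ is exact only when $P$'s objective and constraints decouple the values from the perturbed locations, or when the location shuffle preserves the optimal value; otherwise one must either assume outright that $P_{\text{BL}}$'s feasible region contains $\bv$ (equivalently, that the original values stay $\beta$-faithful under $\tilde{\bm{\ell}}$), or weaken the bound so that its right-hand side absorbs the location-induced objective gap. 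Everything after feasibility is the routine optimality-plus-triangle-inequality computation shown above.
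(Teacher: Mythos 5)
Your proof takes essentially the same route as the paper's: the paper's one-line argument likewise rests on the feasibility of $\bv$ for the lower-level problem, the optimality of $\dot{\bv}$ for $P_{\text{BL}}$, and the triangle inequality on norms. The location-mismatch caveat you flag (that $\cO^*$ is the optimum of the original CIN $(\bm{\ell},\bv)$ while the inner problem is posed over $\tilde{\bm{\ell}}$) is a genuine subtlety, but the paper leaves it implicit and unaddressed, so your version is if anything the more careful one.
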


\noindent Note that POCIN does not explicitly minimize the damages
inflicted by an attack on the critical infrastructure network. Indeed,
simply evaluating the damage of an attack requires access to the real
network $\cin$, which would violate privacy. However, the shuffling
mechanism indirectly addresses this issue. Intuitively, for a
sufficiently large $\alpha_\ell$ in the location obfuscation, the
probability of choosing the most dispatched generator with a single
attack will be close to $\frac{1}{n}$.

\section{Solving the Fidelity Restoration Model}
\label{sec:solving}

$P_{\text{BL}}$ is a \emph{bi-level problem} and bi-level programming
is known to be strongly NP-hard \cite{hansen:92}: Even
evaluating a solution for optimality is NP-hard \cite{vicente:94}.
This paper explores two avenues to solve or approximate $P_{\text{BL}}$.

When the subproblem \eqref{bi:subproblem} is convex, it can be
replaced by its \emph{Karush-Kuhn-Tucker} (KKT) conditions, producing
a single-level model. Moreover, when the subproblem is linear, the
resulting subproblem is a MIP model, which is the case in our
transportation case study.
The subproblem can also be replaced by the {\em relaxation} of
$P_{\text{BL}}$ shown in Figure \ref{fig:nonconvex}. This relaxation
ensures that $\dot{\cin}$ has a feasible solution whose value is close
to the original objective. However, it does not constrain the optimal
solution which will be a lower bound to $\cO(\bx^*, \dot{\bv})$. This
relaxation is used in the power network case study and also satisfies
Theorem \ref{theorem:factor2}. When the subproblem is convex, it is
possible to strengthen Theorem \ref{theorem:factor2} using the
existence of a solution (vector $\bv$), the optimality of $\dot{\bv}$,
and the angle property of a projection on a convex set.

\begin{theorem}
\label{theorem:factor}
When $P_{\text{CL}}$ is used for fidelity restoration and is convex, 
the error induced by POCIN on the CIN node values is bounded by the inequality:
  $
  \| \dot{\bv} - \bv \|_2 \leq
  \| \tilde{\bv} - \bv \|_2.
  $
\end{theorem}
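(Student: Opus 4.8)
The plan is to read the convex restoration model $P_{\text{CL}}$ as a Euclidean \emph{projection} and then apply the obtuse-angle characterization of projections onto convex sets, as anticipated by the remark preceding the statement. Let
\[
C \;=\; \{\, \bv' : \exists\, \bx \text{ with } g_i(\bx,\bv')\le 0\ \forall i\in[k]\ \text{ and }\ \cO(\bx,\bv')\le \cO^*+\beta \,\}
\]
denote the set of value vectors admissible for restoration (the one-sided fidelity constraint is the relaxed form used by $P_{\text{CL}}$, which only certifies a feasible solution of cost at most $\cO^*+\beta$ and does not bound the optimum from below). Because the restoration objective $\|\dot\bv-\tilde\bv\|_2$ depends on $\dot\bv$ alone, $P_{\text{CL}}$ is equivalent to $\min_{\dot\bv\in C}\|\dot\bv-\tilde\bv\|_2$, so its optimizer is the Euclidean projection $\dot\bv=\Pi_C(\tilde\bv)$. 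The first step is to show $C$ is convex: under the convexity hypothesis the joint region in $(\bv',\bx)$ is an intersection of sublevel sets of the convex functions $g_i$ and $\cO(\cdot,\cdot)-\cO^*-\beta$, hence convex, and $C$ is its image under the linear projection onto the $\bv'$-coordinates, so $C$ is convex as well.

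The second step is to certify that the true value vector $\bv$ belongs to $C$. Taking $\bx=\bx^*$, the optimal solution of $P$ on the original description $\cin$, feasibility of the original network gives $g_i(\bx^*,\bv)\le 0$ for all $i$, while $\cO(\bx^*,\bv)=\cO^*(\cin)=\cO^*\le \cO^*+\beta$. Thus $(\bv,\bx^*)$ is feasible and $\bv\in C$; this is the ``existence of a solution'' ingredient.

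The third step combines the projection and the membership via the variational inequality for projection onto a convex set,
\[
\langle\, \tilde\bv-\dot\bv,\ \bv-\dot\bv \,\rangle \;\le\; 0 ,
\]
valid for every $\bv\in C$. Decomposing $\tilde\bv-\bv=(\tilde\bv-\dot\bv)+(\dot\bv-\bv)$ and expanding,
\[
\|\tilde\bv-\bv\|_2^2 \;=\; \|\tilde\bv-\dot\bv\|_2^2 \;+\; 2\langle\, \tilde\bv-\dot\bv,\ \dot\bv-\bv \,\rangle \;+\; \|\dot\bv-\bv\|_2^2 ,
\]
where the cross term equals $-2\langle \tilde\bv-\dot\bv,\ \bv-\dot\bv\rangle\ge 0$ by the angle inequality. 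Dropping this nonnegative term together with the nonnegative $\|\tilde\bv-\dot\bv\|_2^2$ yields $\|\dot\bv-\bv\|_2^2\le\|\tilde\bv-\bv\|_2^2$, i.e.\ the claimed bound.

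The main obstacle is the first step: confirming that eliminating the inner variable $\bx$ really leaves a convex admissible set $C$ in value space, which relies on both $\cO$ and the $g_i$ being convex and on the fidelity constraint entering in its one-sided (sublevel) relaxed form, so that no concave superlevel constraint is introduced. Once convexity of $C$ and $\bv\in C$ are established, the estimate is the standard one-line projection argument, and it sharpens the factor-$2$ bound of Theorem~\ref{theorem:factor2}, whose proof uses only the triangle inequality together with optimality of $\dot\bv$, to the factor-$1$ bound stated here.
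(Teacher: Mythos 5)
Your proof is correct and follows essentially the same route as the paper's: view $P_{\text{CL}}$ as a Euclidean projection onto the convex feasible set in value space, certify $\bv$ belongs to that set via the original optimal solution, and invoke the variational (obtuse-angle) characterization of the projection. Two small points of comparison: the paper splits into the cases $\tilde\bv\in\CC_P$ and $\tilde\bv\notin\CC_P$, which your argument subsumes since the variational inequality holds trivially when $\dot\bv=\tilde\bv$; and your explicit expansion of $\|\tilde\bv-\bv\|_2^2$ to drop the nonnegative cross term is the right way to finish, and is in fact more careful than the paper's closing appeal to ``the triangle inequality on norms,'' which by itself would only reproduce the factor-$2$ bound. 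Your caveat about needing the fidelity constraint in one-sided sublevel form for convexity of $C$ is also a legitimate observation that the paper glosses over by simply asserting $\CC_P$ is convex.
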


Other methods to solve bi-level problems include \emph{descent
methods} \cite{kolstad1990derivative}, \emph{penalty function methods}
\cite{aiyoshi1984solution}, and \emph{evolutionary approaches}
\cite{mathieu1994genetic}. The reader can consult \cite{sinha2018review}
for an extensive review on the topic.

\begin{figure}[t]
\begin{mdframed}
  \vspace{-12pt}
    \begin{align}
    \!\!\!\!\!
  P_{\text{CL}} \!=\!
      \min_{(\dot{\bv}, \bx)} & \| \dot{\bv} - \tilde{\bv} \|_2 \tag{n1} \label{ni:obj} \\
    \text{s.t.:}\; & 
      | \cO(\bx, \dot{\bv}) - \cO^*| \leq \beta
    \tag{n2} \label{ni:constr1} \\
      &\!\!\!\!g_i(\bx, \dot{\bv}) \leq 0 \; \forall i \in [k]
    \tag{n3} \label{bi:subproblem-r}
\end{align}
\vspace*{-16pt}
\end{mdframed}
  \caption{The Relaxation of $P_\text{BL}$.}
  \label{fig:nonconvex}
\end{figure}  

\section{Experimental Results}
\label{sec:experimental_results}

This section presents experimental results on two case studies: power
network and traffic obfuscation problems.

\subsection{Power Network Obfuscation Problem}

This scenario considers a transmission operator who would like to release a description of its network to stimulate research but is worried that malicious actors could use it to design a cyber-attack targeting the generators that would cause maximum damage. The operator seeks to obfuscate the locations and maximum capacities of its generators while preserving the realism of the network description. The experiments were performed on a variety of benchmarks from the NESTA library~\cite{Coffrin14Nesta}. 
The results analyze the dispatch values produced by POCIN and determine how well the obfuscated networks sustain an attack. For brevity, the results are highlighted on the IEEE 118 bus test case. 
The experiments use a privacy loss $\epsilon$ of $1.0$, and vary the \emph{indistinguishability   levels} $\alpha_l$ from 1\% to 10\% of the network diameter $d(\cin)$, 
and the \emph{faithfulness level} $\beta$ in $\{10^{-2}, 10^{-1}\}$,
while $\alpha_v$ is fixed to 0.1 p.u. $\approx$ 10 MW. The model was implemented using the Julia package PowerModels.jl with IPOPT~\cite{Coffrin:18} for solving the nonlinear AC power flow models. POCIN takes less than 1 minute to produce the obfuscated networks.

\begin{figure*}[!t]
\begin{subfigure}{.7\textwidth}
  \centering
         \includegraphics[width=.48\textwidth, valign=t]{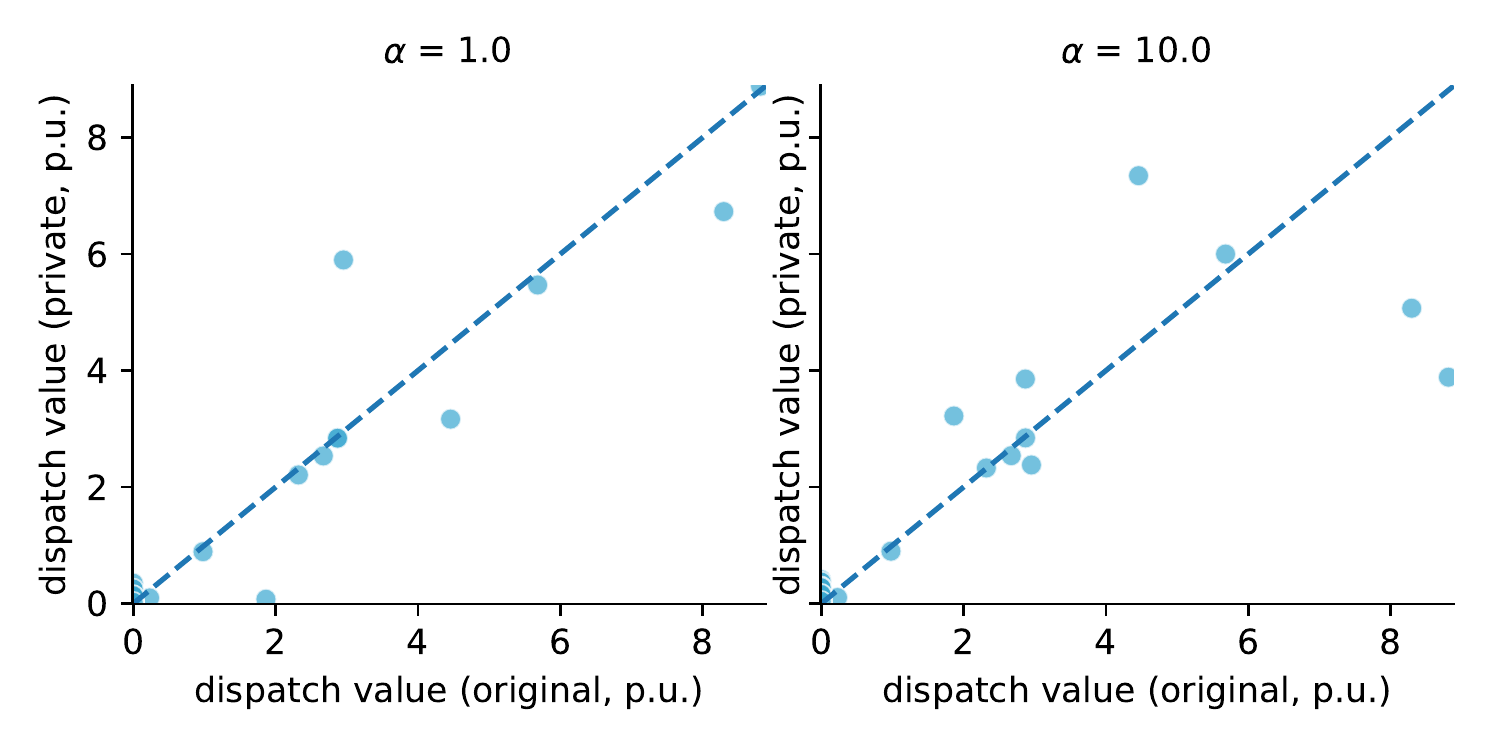}
         \includegraphics[width=.48\textwidth, valign=t]{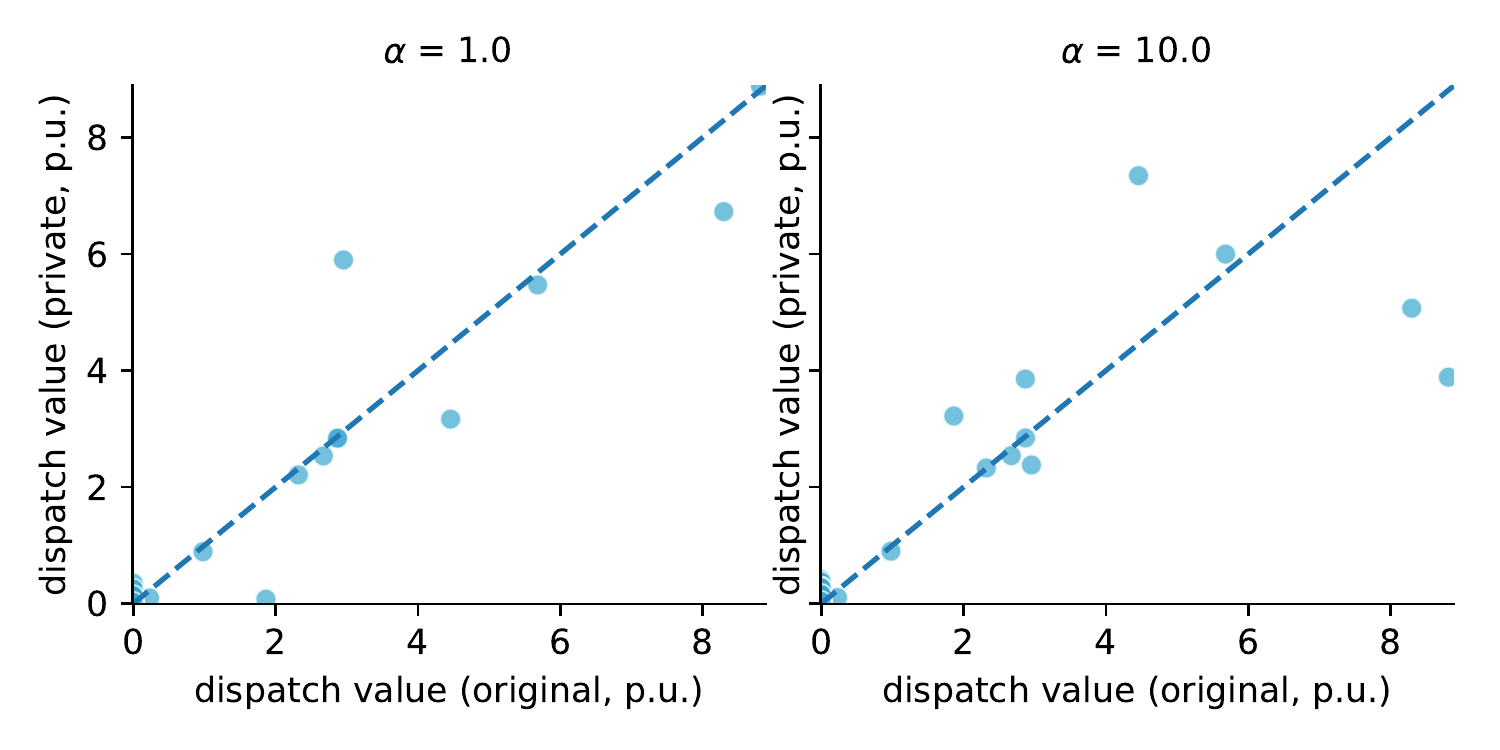}
  \caption{Active dispatch between original and obfuscated networks with $\epsilon=1.0$, $\beta=0.01$, \\$\alpha_l=\alpha \% \times d(\cin)$, $\alpha_v=0.1$. Left: Generator active generation. Right: Bus active injection. } 
  \label{fig:dispatch_118-gen}
\end{subfigure}
\begin{subfigure}{.30\textwidth}
  \centering
  \includegraphics[width=.70\textwidth,height=80pt]{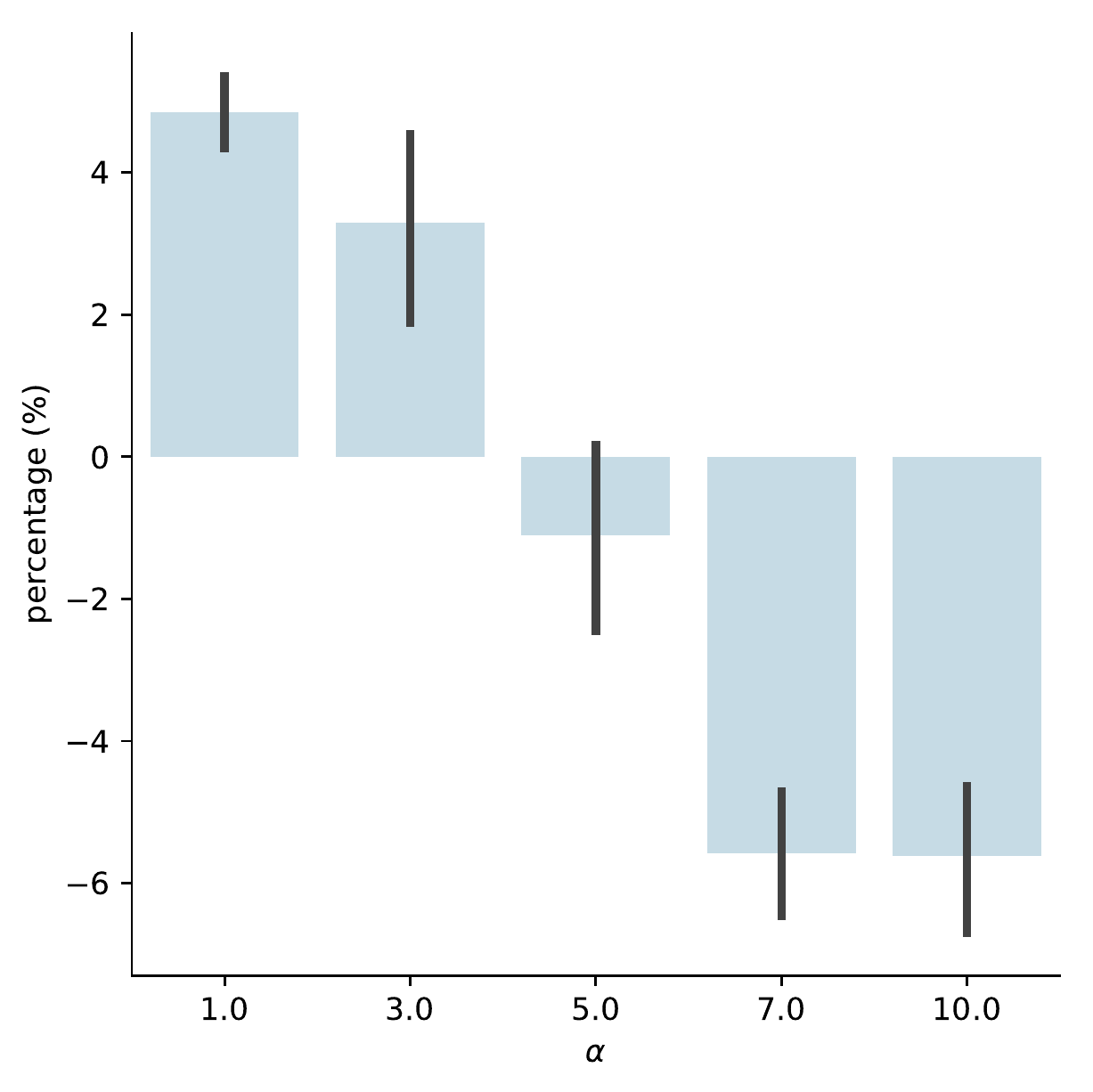}
  \caption{AC optimal dispatch costs differences (in \%) between original and obfuscated networks, with $\beta=0.1$}
  \label{fig:opf_118}
\end{subfigure}
\vspace{-2mm}
\caption{(IEEE 118 bus) Dispatch \& Cost Analysis}
\end{figure*}

\paragraph{Dispatch \& Cost Analysis}

The section studies the privacy and realism of the obfuscated networks
produced by POCIN.  \Cref{fig:dispatch_118-gen} shows the active
dispatch values of all generators in the obfuscated network and
compares them with their associated values in the original network.
The left plot compares the dispatch for each generator, while the
right plot compares the generation on each bus. Note that POCIN only
swaps generator locations and do not assign generators to arbitrary
buses. Not surprisingly, the dispatch differences are more pronounced
as the indistinguishability level increases. {\em More importantly,
  the dispatch values can no longer be used to distinguish generators
  and/or buses.}

\begin{table}[!t]
\centering
\small
\resizebox{0.6\linewidth}{!}
  {
  \begin{tabular}{lrrrrr}
  \toprule
  & \multicolumn{5}{c}{$\alpha$}\\
  \cmidrule(lr){2-6} 
  Network instance &   {1} & {3} & {5} & {7} & {10}\\
  \midrule
  nesta\_case14\_ieee  &      6   &   6   &   6   &   6   & 6\\
  nesta\_case30\_ieee      &    0   &   0   &   0   &   0   & 0 \\
  nesta\_case57\_ieee      &    12  &   10  &   6   &   12  & 12\\
  nesta\_case118\_ieee     &    0   &   0   &   0   &   0   & 0\\
  \bottomrule
\end{tabular}
}
\caption{Feasibility Statistics before Phase 3 (in percentage).}
\label{tbl:feasibility} 
\end{table}


Table~\ref{tbl:feasibility} reports the success rate of obtaining a
feasible AC power flow after Phase 2 on various benchmarks over 50
runs. Without Phase 3, the optimization problem $P$ is infeasible in most or all runs for all test cases.  {\em These results highlight the critical role of Phase 3 of POCIN.} \Cref{fig:opf_118} shows the difference, in percentage, between the dispatch costs of the obfuscated and original networks for increasing the indistinguishability levels $\alpha_l : \alpha_l=\alpha \% \times d(\cin)$.  It reports the mean and standard deviation (shown with black, solid, lines) over 50 runs. {\em The results indicate that obfuscated networks are largely
  within the faithfulness requirement and provide the necessary
  fidelity for Problem $P$ on the obfuscated network.}

%
%
%

\begin{figure}[t]
  \centering
  \includegraphics[width=.30\textwidth]{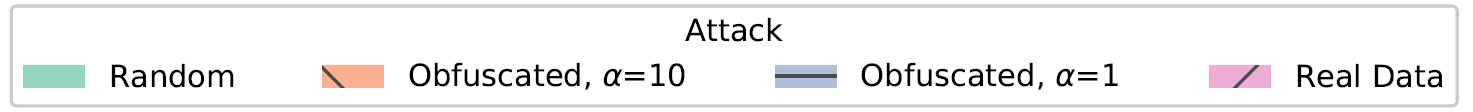}\\
   \includegraphics[width=.20\textwidth,height=85pt]{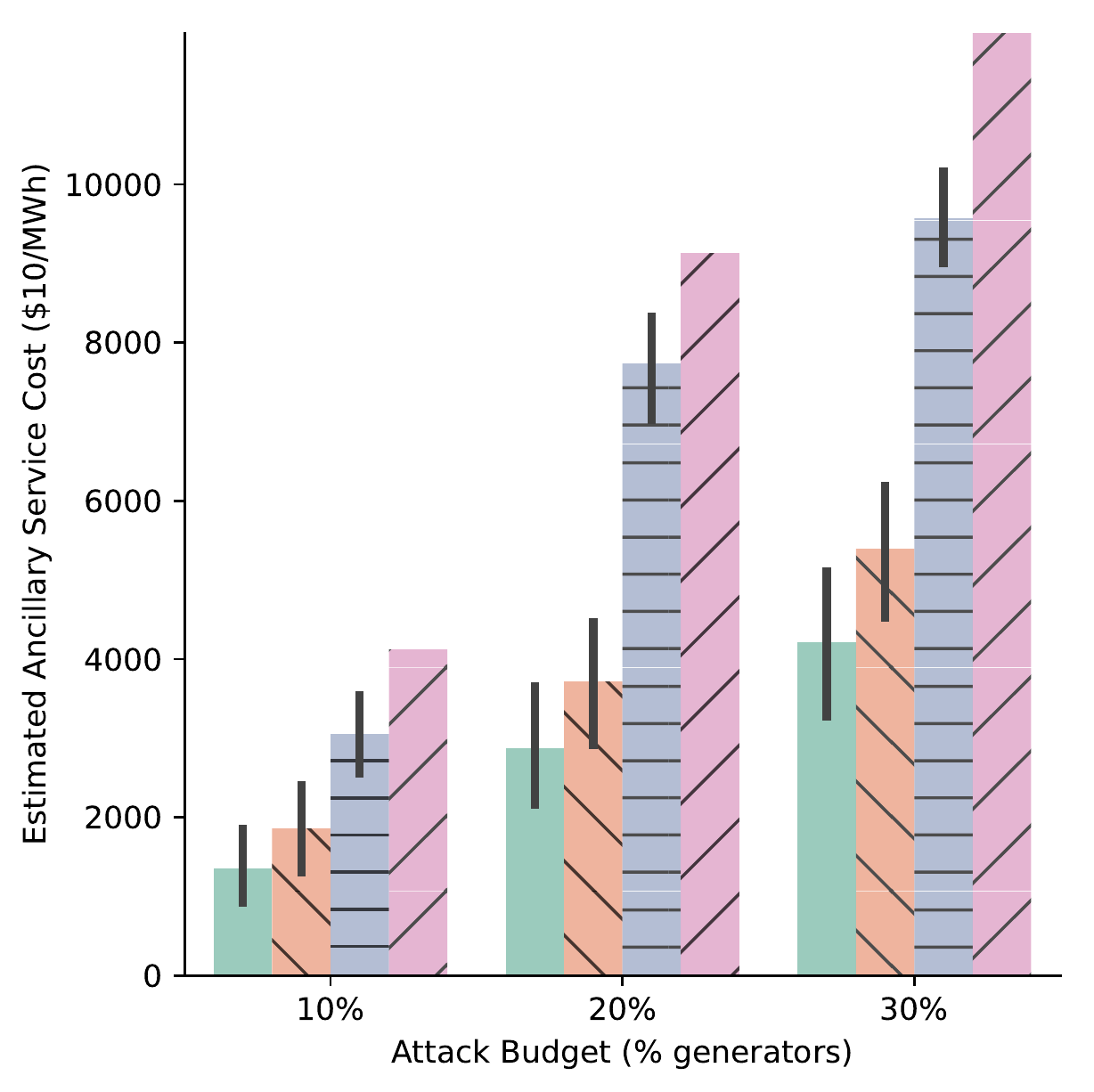}
   \includegraphics[width=.20\textwidth,height=85pt]{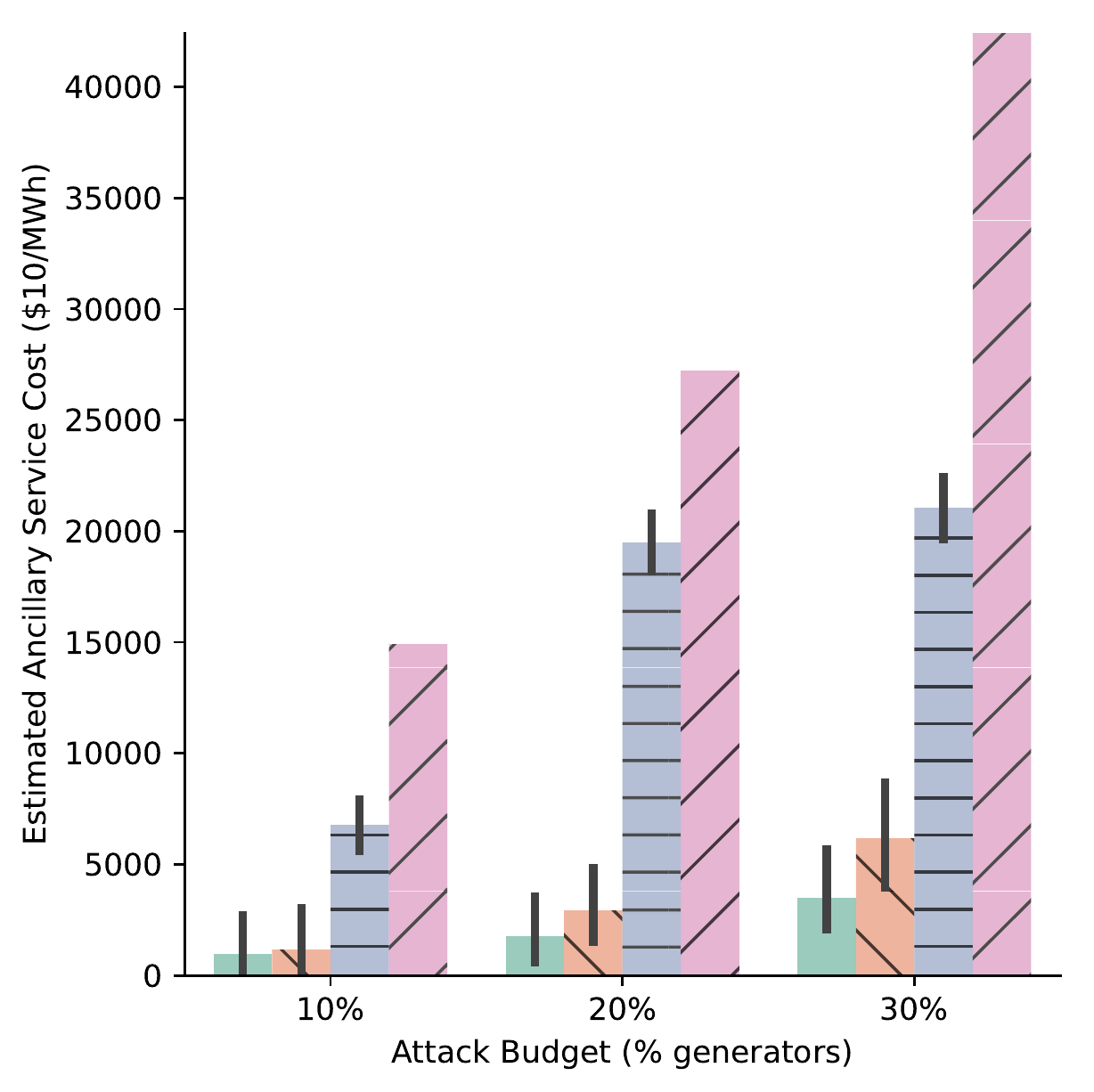}
  \caption{Ancillary service costs for the attack type on the IEEE-57 (left) and IEEE-118 (right) Networks $(\beta=0.1)$.} 
  \label{fig:attack_118}
\end{figure}

\paragraph{Attack Simulation}

It remains to study how an attacker may leverage the obfuscated
network to damage the original CIN. The attacker is given an
\emph{attack budget} $b$ denoting the percentage of generators that
can be damaged. To assess the benefits of POCIN, three type of attacks
are compared (where $k$ denote the number of generators that makes up $b\%$).

\begin{enumerate}[leftmargin=*, parsep=0pt, itemsep=0pt]

\item \emph{Random Attack}: $k$ generators are randomly selected. 

\item \emph{Obfuscated Attack}: The attacker chooses the $k$ generators with the largest dispatches in $\dot{\cin}$.

\item \emph{Fully-Informed Attack}: The attacker chooses the $k$ generators with the largest dispatches in $\cin$.
\end{enumerate}

\noindent The experiments assume estimated ancillary service costs of
\$10/MWh to serve the load that cannot be provided by the remaining
generators.  \Cref{fig:attack_118} shows the costs for each
attack type at varying of the attack budget $b \in \{10, 20, 30\}$ and the
indistinguishability value $\alpha \in \{1.0, 10.0\}, \alpha_l=\alpha
\% \times d(\cin)$ on the IEEE-57 and IEEE-118 benchmarks with
faithfulness value $\beta$ and $\alpha_v$ both set to $0.1$.  The
results average 50 simulations for each combination of parameters.
The \emph{random attacks} are used as a baseline to assess the damage inflicted by an uninformed attacker. Not surprisingly, they result in the lowest costs in each setting. In contrast, \emph{fully-informed attacks} produce the largest damages on the networks with increasing ancillary costs as the attack budget increases. 
The results for the \emph{obfuscated attacks} are particularly interesting: The obfuscation significantly reduces the power of an attacker. 
\emph{Remarkably, as the location indistinguishability values
 increase, the inflicted damages decrease and converge to that of random attacks}.  Larger indistinguishability implies more noise, and thus a higher chance for an attacker to damage less important generators. Note however that \emph{the mechanism still preserves the desired network fidelity}.

\subsection{Traffic Network Obfuscation Problem}

The scenario involves a city that would like to release its traffic
data, but is concerned about a cyber-attack on its traffic control
system. The city aims at releasing an obfuscated version of the data
that would preserve the trip durations of its commuters and would
minimize the damage of a cyber-attack on its traffic controllers
regulating traffic flows on the road segments. The case study involves
the traffic network of the city of Ann Arbor, MI and 8,000 \emph{real trips} from an origin O to a destination D (O-D pairs). For simplicity, the travel times on an
edge $e$ are given by a linear combination $d_e + \gamma t_e$ of the
distance $d_e$ and the traffic $t_e$, but the results generalize
naturally to more complex models. POCIN
obfuscates the location of the edges and their transit data, the
distance and the network being public information.
Problem P is characterized by solving a shortest path for
each commuter from her origin to her destination, using historical
traffic data. Since the shortest path problem is totally unimodular,
the bi-level program $P_{\text{BL}}$ of POCIN can be reformulated as a MIP, and thus the
problem can be solved exactly.  The experiments use a privacy loss
of $\epsilon = 1.0$, vary the \emph{indistinguishability level}
$\alpha_l$ from 1\% to 25\% of the number of nodes and select the
\emph{faithfulness level} $\beta$ of 0.1.
The implementation takes less than 20 minutes to generate the obfuscated
networks.

\paragraph{Traffic Weights \& Travel Costs}

\begin{figure}[t] \centering
\includegraphics[width=.40\textwidth,
valign=t]{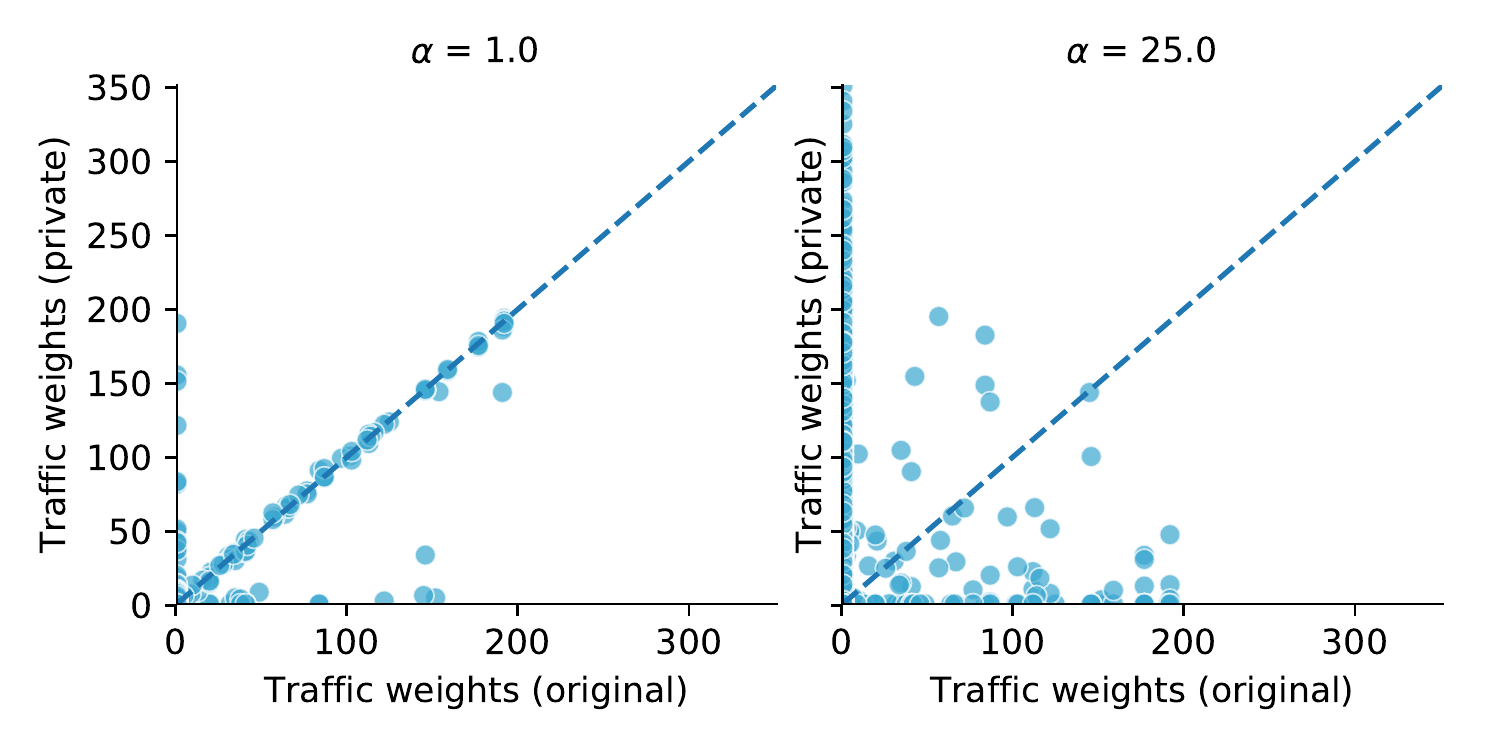}
\caption{Traffic weights on all the roads between original and
obfuscated networks with faithfulness parameters $\beta=0.1$.}
\label{fig:mobility-weights} \end{figure}

\begin{figure}[t]
\centering
\includegraphics[width=.40\textwidth, valign=t]{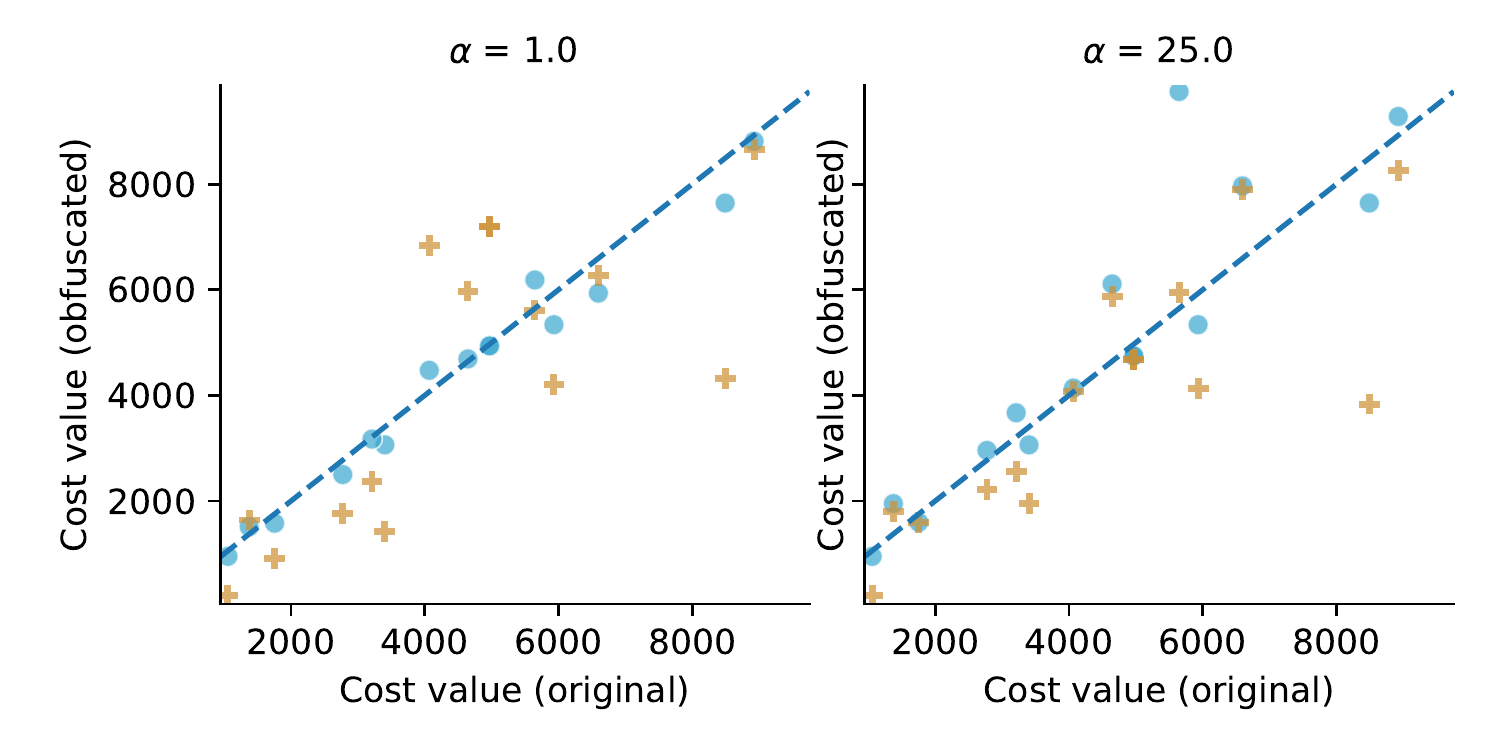}
\caption{Travel costs of 15 trips between original and obfuscated networks with faithfulness parameters $\beta=0.1$.} 
\label{fig:mobility-cost}
\end{figure}

The first experiments concern the privacy and realism of the
obfuscated network produced by POCIN. \Cref{fig:mobility-weights}
shows the travel times on all the roads in the obfuscated and original
networks.  {\em As the indistinguishability levels increase, the
  networks become significantly different}. \Cref{fig:mobility-cost}
shows the shortest paths cost of 15 O-D pairs in the obfuscated and
original networks.  Yellow/'+' dots are travel costs before Phase 3
and blue dots are the travel costs of POCIN. {\em The results show
  that POCIN preserves the shortest paths costs with high fidelity and
  that the fidelity restoration step brings significant benefits.}

\paragraph{Attack Simulation}

The second set of results concerns the obfuscation capability to
mitigate the damage of an attack. A malicious agent targets the
traffic controllers of various road segments to increase the total
travel costs of all O-D pairs. \Cref{fig:mobility-attack-cost} shows
the average increases (in \%) of the travel costs for each attack
strategy, at various attack budgets consisting of $k \in \{10, 20,
50\}$ roads, indistinguishability value $\alpha$, and with
faithfulness value $\beta = 0.1$. Larger values of $\beta$ attain
similar results.  Results are averages over 50 simulations for each
combination of parameters. The results show that random attacks are
completely ineffective, while fully-informed attacks increase travel
times substantially. In contrast, the obfuscation dramatically
decreases the damage. When the indistinguishability level increases,
the obfuscation eliminates the damage for attack budgets targeting 10
roads and only increases the travel times by less than 20\% on larger
attacks.

\begin{figure}[t]
  \centering
  \includegraphics[width=.32\textwidth]{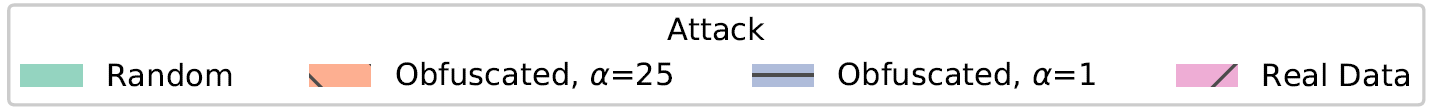}\\
  \includegraphics[width=.20\textwidth, height=80pt, valign=t]{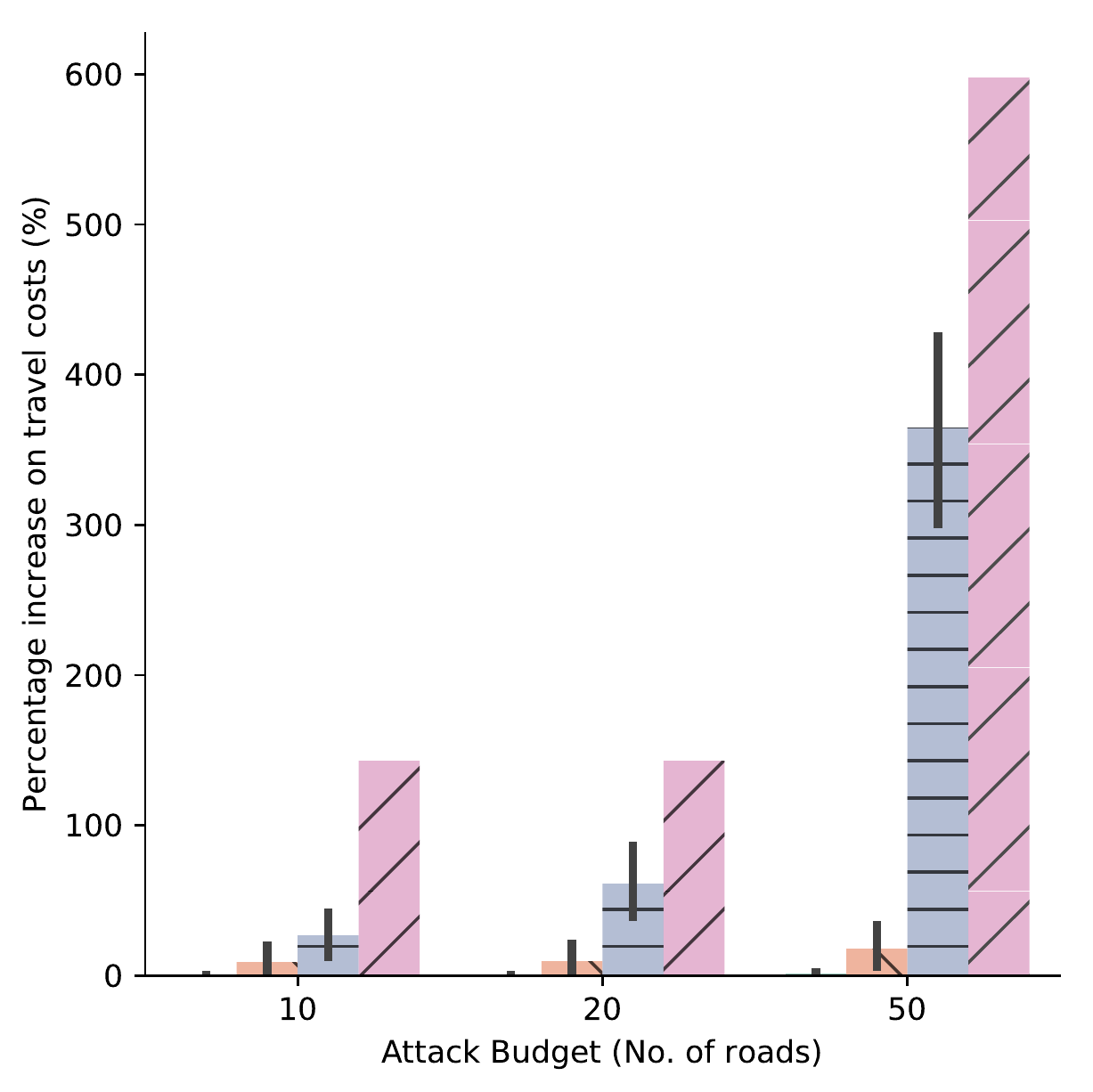}
  \caption{Average increase on travel times (in \%) of all O-D pairs with faithfulness parameters $\beta=0.1$.}
  \label{fig:mobility-attack-cost}
\end{figure}

\section{Related Work}
\label{sec:related_work}

The release of differentially private datasets that protects the
\emph{value} of the data is a challenging task that has been studied
by several authors. Often the released data is generated from a data
synopsis in the form of a noisy
histogram~\cite{li:14,qardaji:14,fioretto:AAMAS-18}.  Protecting locations privacy typically uses the framework
of \emph{geo-indistinguishability} \citep{andres2013geo}, in which
locations are made indistinguishable in the Euclidean plane.

The release of differentially private networks has also been studied in the literature: It primarily focuses on social networks
\cite{Proserpio:14,Blocki:13,kasiviswanathan2013analyzing}. These
methods focus on either protecting the privacy of the \emph{node} or
the \emph{edge} and ensure that the output distribution does not
change significantly when a node (resp.~edge) and all its adjacent
edges (resp.~nodes) are added to, or removed from, the graph. The goal
of these methods is not to preserve the original network topology.  In
contrast to these approaches, POCIN focuses on the release of
privacy-preserving network data that protects locations and values of
the network components while preserving the network topology and
fidelity. As shown by the experimental results, this places additional
requirements on the mechanism, including the need for POCIN to use
optimization to redistribute the noise and ensure that the data
release admits a realistic solution to the optimization problem.

\emph{The evaluation of privacy-preserving techniques with respect to
malicious attacks is also a novel contribution of this
paper}. Traditional research on attack detection and prevention on
cyber-physical systems is not concerned with privacy and data release. For instance \citep{ghafouri2018adversarial} uses
regression to detect anomalous sensor readings
and \citep{junejo2016behaviour} focuses on
predicting attacks observing real and simulated data.  Finally,
bi-level problems have been used in security applications of AI
\cite{pita2009using,jain2011double,nguyen2016conquering,zhao2017efficient}.

\section{Conclusions}
\label{sec:conclusions}

This paper presented a privacy-preserving scheme for the release of
\emph{Critical Infrastructure Networks} (CINs). The proposed
\emph{Privacy-preserving Obfuscation mechanism for CIN} (POCIN)
obfuscates values and locations of sensitive network elements
combining several differential privacy mechanisms with a bi-level
optimization problem.  It does so without altering the CIN topology
and ensuring that the released obfuscated network preserves the
fundamental properties of an optimization problem of interest.

The paper proposes exact and relaxation solutions for solving the
bi-level optimization problems. The POCIN mechanism was tested on
realistic test cases for the power grid and transportation system.
The results show that POCIN is effective in obfuscating values and
locations of the network parameters. Importantly, the result also
illustrates the effectiveness of POCIN to deceive malicious agents
that exploit the data release to produce as much damage as possible to
the CIN.

\paragraph{Acknowledgments} The authors would like to thank Kory Hedman for extensive discussions. The authors are also grateful to the anonymous reviewers for their valuable comments. This research is partly funded by the ARPA-E Grid Data Program under Grant 1357- 1530.

\newpage

\bibliographystyle{named} \bibliography{opf_bib}

\newpage
\setcounter{theorem}{5}
\setcounter{corollary}{0}
\setcounter{figure}{10}

\appendix
\section{Missing Proofs}

First, we review some important properties of DP.

\emph{Composability} ensures that a combination of differentially  private algorithms preserve differential privacy.


\begin{theorem}[Parallel Composition] 
\label{th:par_composition} 
Let $D_1$ and $D_2$ be disjoint subsets of $D$ and $\cA$ be an
$\epsilon$-DP algorithm.  Computing $\cA(D
\cap D_1)$ and $\cA(D \cap D_2)$ satisfies $\epsilon$-DP.
\end{theorem}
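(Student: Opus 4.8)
The plan is to show directly that the combined mechanism $\cM(D) = \big(\cA(D \cap D_1),\, \cA(D \cap D_2)\big)$, in which the two invocations of $\cA$ draw on independent coin tosses, satisfies the DP inequality \eqref{eq:dp_def}. First I would fix an arbitrary pair of neighboring inputs $D \sim D'$ (differing in at most one individual) and an arbitrary product event $O = O_1 \times O_2 \subseteq \R \times \R$. The crux of the argument is the observation that, since $D_1$ and $D_2$ are disjoint, the single individual on which $D$ and $D'$ differ can belong to at most one of the two restrictions: either $D \cap D_1 \sim D' \cap D_1$ while $D \cap D_2 = D' \cap D_2$, or symmetrically with the roles of $D_1$ and $D_2$ exchanged (or the differing individual lies outside both subsets, in which case both restrictions are unchanged).

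Next, using independence of the randomness across the two calls, I would factor the joint probability as $\Pr[\cM(D) \in O] = \Pr[\cA(D \cap D_1) \in O_1]\cdot\Pr[\cA(D \cap D_2) \in O_2]$, and likewise for $D'$. Assuming without loss of generality that the differing record lies in $D_1$, the second factor is identical for $D$ and $D'$ (contributing a ratio of exactly one), while the first factor is controlled by the $\epsilon$-DP guarantee of $\cA$ applied to the neighboring pair $D \cap D_1 \sim D' \cap D_1$, giving $\Pr[\cA(D \cap D_1) \in O_1] \leq \exp(\epsilon)\,\Pr[\cA(D' \cap D_1) \in O_1]$. Multiplying the two factors yields $\Pr[\cM(D) \in O] \leq \exp(\epsilon)\,\Pr[\cM(D') \in O]$, as required. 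The symmetric case is handled identically, and the case where the differing individual lies outside both subsets even gives a ratio bounded by one.

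The main obstacle I expect is formalizing the disjointness step rigorously: one must argue that under the neighboring relation the change propagates to \emph{exactly one} restricted dataset. This hinges on the fact that each individual contributes a single record together with $D_1 \cap D_2 = \emptyset$, so the two restricted views place the affected record on one side only. A secondary technical point is reducing a general event $O \subseteq \R \times \R$ to product (measurable-rectangle) events; this follows from the standard extension from rectangles to the product $\sigma$-algebra, combined with the independence of the two mechanisms' randomness. I would either restrict attention to rectangles without loss of generality or invoke this extension to cover arbitrary $O$.
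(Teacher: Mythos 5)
The paper does not actually prove this statement: Theorem~\ref{th:par_composition} is stated in the appendix as a standard background property of differential privacy (it is the classical parallel-composition result from the DP literature) and is simply invoked later, e.g., in the proof of Corollary~1. So there is no in-paper argument to compare against; your proposal must be judged on its own.

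On its own terms, your argument is the standard and essentially correct proof. The two load-bearing observations are exactly right: (i) disjointness of $D_1$ and $D_2$ forces the single differing individual into at most one of the restricted datasets, so the other restriction is identical for $D$ and $D'$ and contributes a ratio of one; (ii) independence of the coin tosses lets you factor the joint probability and apply the $\epsilon$-DP guarantee of $\cA$ to the one coordinate that changed. The only soft spot is your final technical step. An inequality of the form $\Pr[\cM(D)\in O]\leq e^{\epsilon}\Pr[\cM(D')\in O]$ verified on measurable rectangles does \emph{not} extend to the full product $\sigma$-algebra by the usual $\pi$-$\lambda$ (generating-class) argument: the collection of events satisfying such a one-sided inequality between two measures is not closed under complements or proper differences, so Dynkin's lemma does not apply. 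The clean fix is to avoid rectangles altogether: write $\Pr[\cM(D)\in O]$ by conditioning on (integrating over) the output of the unchanged coordinate, bound each slice probability $\Pr[\cA(D\cap D_1)\in O_{o_2}]\leq e^{\epsilon}\Pr[\cA(D'\cap D_1)\in O_{o_2}]$ using the DP guarantee, and integrate against the common distribution of $\cA(D\cap D_2)=\cA(D'\cap D_2)$. In the discrete-range setting typical of DP this is just a sum over atoms and the issue disappears, but as written the appeal to ``the standard extension from rectangles'' is the one step that would not survive scrutiny.
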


\emph{Post-processing immunity} ensures that privacy guarantees are
preserved by arbitrary post-processing steps. 

\begin{theorem}[Post-Processing Immunity] 
\label{th:postprocessing} 
Let $\cA$ be an $\epsilon$-DP algorithm and $g$ an arbitrary mapping from the set of possible outputs $\cO$ to an arbitrary set. Then, $g \circ \cA$ is $\epsilon$-DP.
\end{theorem}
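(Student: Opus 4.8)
The plan is to prove this via the standard preimage argument, first reducing to the case where $g$ is deterministic and then lifting to randomized post-processors. The key conceptual point is that once $\cA$ has produced its output, $g$ no longer has access to the underlying dataset, so it cannot reintroduce any dependence on which of two neighboring inputs was used; formally, this manifests as the fact that the DP inequality for $\cA$ transfers directly to the pushed-forward output event.

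First I would treat the case of a deterministic map $g : \cO \to \cO'$. Fix any (measurable) event $S$ in the range of $g$ and any pair of neighboring datasets $D \sim D'$. Let $T = g^{-1}(S) = \{ o \in \cO : g(o) \in S \}$ denote its preimage, which is itself an event in $\cO$. Since the event $\{ g(\cA(D)) \in S \}$ is identical to the event $\{ \cA(D) \in T \}$, we obtain
\[
Pr[g(\cA(D)) \in S] = Pr[\cA(D) \in T] \leq \exp(\epsilon)\, Pr[\cA(D') \in T] = \exp(\epsilon)\, Pr[g(\cA(D')) \in S],
\]
where the single inequality is precisely the $\epsilon$-DP guarantee of $\cA$ (\Cref{eq:dp_def}) applied to the event $T$. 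As $S$ and the neighboring pair $D \sim D'$ were arbitrary, $g \circ \cA$ satisfies the definition of $\epsilon$-DP.

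Then I would extend to a randomized $g$. Any randomized post-processor can be written as a family of deterministic maps $g(\cdot\,; r)$ indexed by auxiliary randomness $r$ drawn independently of both $\cA$'s output and the dataset. Conditioning on $r$, the deterministic argument above yields the $\exp(\epsilon)$ bound for every fixed $r$ with the \emph{same} constant; integrating over the distribution of $r$ then preserves the inequality, since the factor $\exp(\epsilon)$ is independent of $r$ and pulls out of the expectation. Hence $g \circ \cA$ is $\epsilon$-DP.

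The only genuine subtlety, and the step I would be most careful about, is the randomized case: one must justify that the internal coins of $g$ are independent of the input dataset, so that the conditioning on $r$ is legitimate and the per-$r$ bounds can be averaged without the privacy constant degrading. The deterministic core is immediate from the definition via the preimage, so the randomized reduction is the only place where a misstep could silently weaken the bound.
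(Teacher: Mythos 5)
Your proof is correct: the preimage argument for deterministic $g$ followed by conditioning on the independent coins of $g$ for the randomized case is the canonical proof of post-processing immunity (cf.\ Proposition 2.1 in Dwork and Roth). Note that the paper itself offers no proof of this theorem --- it is stated in the appendix as a standard result imported from the differential privacy literature (\cite{Dwork:13}) and used as a black box in the corollaries --- so there is nothing to compare against; your argument fills that gap soundly, and your care about the independence of $g$'s internal randomness from the dataset is exactly the right point to be careful about.
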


  \begin{corollary}
    \label{thm:pahse2}
    Given $\alpha_\ell > 0$, Phase 1 of POCIN achieves $\alpha_\ell$-location-indistinguishability. 
  \end{corollary}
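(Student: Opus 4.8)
The plan is to recognize Phase 1 as an instance of the Exponential Mechanism (Theorem~\ref{th:m_exp}) applied independently to each node, followed by a collision-resolving assignment step that is pure post-processing. For a fixed node $i$, the sampling rule \eqref{eq:exp_step2} is exactly the exponential mechanism with utility $u(\ell_i, \ell_j) = d_\ell(\ell_i, \ell_j)$ over the (public) set of candidate positions $\{\ell_j\}$. I would therefore reduce $\alpha_\ell$-location-indistinguishability of the whole phase to four ingredients: (i) a sensitivity bound on $u$, (ii) an invocation of Theorem~\ref{th:m_exp}, (iii) parallel composition across nodes, and (iv) post-processing immunity for the assignment step.

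First I would bound the utility sensitivity. Let $\ell \sim_\ell \ell'$ be a neighboring pair, so they differ only in the location of a single node $i$, with $d_\ell(\ell_i, \ell_i') \le \alpha_\ell$. For any fixed output position $\ell_j$, the triangle inequality gives $|d_\ell(\ell_i,\ell_j) - d_\ell(\ell_i',\ell_j)| \le d_\ell(\ell_i,\ell_i') \le \alpha_\ell$, and for every unchanged node the utility is identical; hence $\Delta_u \le \alpha_\ell$. Because \eqref{eq:exp_step2} places $2\alpha_\ell$ in the denominator of the exponent, it uses a scale no smaller than $2\Delta_u$, so the effective privacy parameter is $\epsilon \Delta_u / \alpha_\ell \le \epsilon$ and Theorem~\ref{th:m_exp} applies: each per-node sampling is $\epsilon$-DP with respect to $\sim_\ell$, which by the definition following \eqref{eq:dp_def} is precisely $\alpha_\ell$-location-indistinguishability for a single node.

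Next I would lift the single-node guarantee to the full location vector. The mechanism draws $\tilde\ell_1, \dots, \tilde\ell_n$ independently, each from its own private input $\ell_i$, while the candidate target positions are public (the topology being public) and therefore identical across neighbors. Since a $\sim_\ell$-neighboring pair differs in the input of exactly one node, the per-node mechanisms act on disjoint pieces of the sensitive data and only one piece changes; Parallel Composition (Theorem~\ref{th:par_composition}) then yields $\epsilon$-DP for the joint output. Finally, resolving the assignment problem to obtain a one-to-one map consumes only the already-released sampled locations, so it is an arbitrary post-processing map; Post-Processing Immunity (Theorem~\ref{th:postprocessing}) preserves $\epsilon$-DP, establishing $\alpha_\ell$-location-indistinguishability of Phase 1.

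The main obstacle is the composition step: I must justify that the independent per-node applications genuinely form the disjoint-block structure required by parallel composition, which hinges on the candidate target positions being public, so that a neighboring change to $\ell_i$ perturbs only node $i$'s own input and not the candidate set seen by the other nodes. I would likewise verify that the assignment step reads no additional private information beyond the sampled $\tilde\ell_i$, so it truly qualifies as post-processing; any such leakage would break the clean $\epsilon$ bound.
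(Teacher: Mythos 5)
Your proof is correct and follows essentially the same route as the paper's: reduce Phase 1 to the exponential mechanism (with the sensitivity of $d_\ell$ bounded by $\alpha_\ell$ on $\sim_\ell$-neighbors), compose across nodes via parallel composition, and treat the assignment step as post-processing. The paper states this only as a one-line citation of the three theorems; you have simply filled in the details, including the sensitivity bound and the scale comparison in the exponent, which the paper leaves implicit.
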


\begin{proof}
    Observe that the Location Obfuscation phase (Phase 1) combines the
    exponential mechanism with a post-processing step.  Therefore, the
    result follows by Theorem \ref{th:m_exp-1}, parallel composition
    (Theorem \ref{th:par_composition}), and post-processing immunity
    of differential privacy (Theorem \ref{th:postprocessing}).
  \end{proof}

\begin{corollary}
  \label{thm:pahse1}
  Given $\alpha_v > 0$, Phase 2 of POCIN achieves $\alpha_v$-value-indistinguishability. 
\end{corollary}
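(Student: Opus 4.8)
The plan is to recognize Phase 2 as a direct instance of the Laplace mechanism and to invoke the Laplace Mechanism theorem (\ref{th:m_lap}). Phase 2 perturbs the value vector via $\tilde{\bv} = \bv + \Lap(\alpha_v/\epsilon)^n$, which is exactly the mechanism that outputs $Q(\bv) + z$ for the \emph{identity query} $Q(\bv)=\bv$ with noise $z \sim \Lap(\alpha_v/\epsilon)^n$. The whole argument therefore reduces to identifying the correct sensitivity parameter and matching it against this noise scale.

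First I would compute the $l_1$-sensitivity of the identity query under the value-adjacency relation $\sim_v$. By definition of $\sim_v$ (the specialization to node values of the metric relation $\sim_\alpha$ of \Cref{sec:differential_privacy}), two $\sim_v$-adjacent descriptions $\cin,\cin'$ agree on every coordinate except at most one index $i$, at which $d_v(v_i,v_i') \le \alpha_v$. Taking $d_v$ to be the scalar distance $|\cdot|$, this gives $\|Q(\cin)-Q(\cin')\|_1 = |v_i - v_i'| \le \alpha_v$, so $\Delta_Q = \alpha_v$.

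Next I would substitute $\Delta_Q = \alpha_v$ into the Laplace Mechanism theorem: with noise scale $\Delta_Q/\epsilon = \alpha_v/\epsilon$, the mechanism satisfies \eqref{eq:dp_def} for every pair of $\sim_v$-adjacent inputs. This is precisely the assertion that Phase 2 is $\alpha_v$-value-indistinguishable, matching the preliminary fact that the Laplace mechanism with scale $\alpha/\epsilon$ is $\alpha$-indistinguishable. Since no further post-processing occurs inside Phase 2 — the Laplace output is the released vector $\tilde{\bv}$ itself — the conclusion follows immediately, paralleling the proof of \Cref{thm:pahse2} for Phase 1 but without its assignment-problem post-processing step.

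The only point requiring care, and hence the main obstacle, is verifying that the sensitivity computed under the \emph{metric} adjacency $\sim_v$ equals $\alpha_v$ rather than being inflated: this hinges on the distance $d_v$ entering $\sim_v$ being compatible with the $l_1$ norm of the sensitivity definition, and on $\sim_v$ perturbing a single coordinate so that no summation over the $n$ indices enters the bound. For a scalar value attribute with $d_v = |\cdot|$ both facts are immediate; for a more general $d_v$ one would first establish $\|Q(\cin)-Q(\cin')\|_1 \le \alpha_v$ from the chosen metric before applying the theorem.
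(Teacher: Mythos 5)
Your proposal is correct and follows essentially the same route as the paper: both identify Phase 2 as a direct instance of the Laplace mechanism with scale $\alpha_v/\epsilon$ and conclude via the fact that this scale yields $\alpha_v$-indistinguishability (the paper simply cites this fact from the metric-DP literature, whereas you spell out the sensitivity computation for the identity query under $\sim_v$-adjacency). No gap; your explicit check that $\Delta_Q = \alpha_v$ is a reasonable unpacking of the cited result rather than a different argument.
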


\begin{proof}
  Observe that the Value Obfuscation phase (Phase 2) is an application of the Laplace Mechanism with scaling factor $\alpha_v / \epsilon$ (Equation~(4): main article). The result follows from Theorem 2 and by observing that the Laplace Mechanism with parameter $\alpha / \epsilon$ achieves $(\epsilon, \alpha)$-indistinguishability \cite{chatzikokolakis2013broadening}.
\end{proof}

\setcounter{theorem}{2}
\begin{theorem}
  POCIN is $(\epsilon, \alpha_p, \alpha_v)$-indistinguishable. 
\end{theorem}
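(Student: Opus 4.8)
The plan is to establish that the full POCIN mechanism inherits $(\epsilon, \alpha_\ell, \alpha_v)$-indistinguishability by decomposing it into its three phases and invoking the already-established per-phase guarantees together with the post-processing immunity of differential privacy. The key observation is that privacy need not be re-argued from scratch: Phases 1 and 2 are the only components that touch the sensitive data directly through randomized queries, while Phase 3 operates only on the already-obfuscated output.

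First I would recall that the definition of $(\epsilon, \alpha_\ell, \alpha_v)$-indistinguishability requires \Cref{eq:dp_def} to hold for any pair of $\sim_{\ell v}$-adjacent datasets, and that $\cin \sim_{\ell v} \cin'$ holds if and only if $\ell \sim_\ell \ell'$ \emph{or} $\bv \sim_v \bv'$. Because the adjacency is a disjunction, it suffices to show that the mechanism protects location-adjacent pairs and value-adjacent pairs separately. By \Cref{thm:pahse2} (Corollary on Phase 1), the location obfuscation achieves $\alpha_\ell$-location-indistinguishability, and by \Cref{thm:pahse1} (Corollary on Phase 2), the value obfuscation achieves $\alpha_v$-value-indistinguishability. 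Since the two phases operate on disjoint attributes of the CIN description, I would then argue that for any $\sim_{\ell v}$-adjacent pair, at least one phase furnishes the required $\exp(\epsilon)$ bound on the relevant component while the other acts as identity, so the composite output distribution satisfies \eqref{eq:dp_def}.

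Next I would invoke the post-processing immunity of DP (\Cref{th:postprocessing}) to absorb Phase 3. The fidelity restoration solves the bi-level program $P_{\text{BL}}$ whose only data-dependent input is the already-obfuscated vector $\tilde{\bv}$ (the public optimum $\cO^*$, the problem $P$, and the topology are not sensitive). Hence $\dot{\bv}$ is a deterministic function of $\tilde{\bv}$, and composing this map with the $(\epsilon, \alpha_\ell, \alpha_v)$-indistinguishable output of Phases 1--2 preserves the guarantee. I would conclude that the overall mechanism satisfies $(\epsilon, \alpha_\ell, \alpha_v)$-indistinguishability.

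\textbf{The main obstacle} is making the disjunctive adjacency argument rigorous: unlike standard sequential composition where privacy budgets add, here the two adjacency relations must be handled so that a single $\epsilon$ (rather than $2\epsilon$) is preserved. The cleanest route is to observe that a $\sim_{\ell v}$-adjacent pair differs in \emph{either} location \emph{or} value but the indistinguishability guarantee is required to match the dimension along which they differ; I would therefore case-split on which disjunct holds and show that the phase responsible for that attribute already delivers the $\exp(\epsilon)$ factor while the orthogonal phase contributes a factor of $1$. Care must be taken that the location-adjacent case does not inadvertently invoke value noise in a way that would require additional budget, and symmetrically for the value-adjacent case; this is where parallel-composition-style reasoning (\Cref{th:par_composition}) over the disjoint location and value attributes carries the argument through with a single $\epsilon$.
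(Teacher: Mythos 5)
Your proposal is correct and follows essentially the same route as the paper's proof: invoke the per-phase indistinguishability corollaries for Phases 1 and 2, combine them through the disjunctive adjacency relation $\sim_{\ell v}$, and absorb Phase 3 via post-processing immunity. Your explicit case-split on which disjunct of the adjacency holds (so that only one phase must supply the $\exp(\epsilon)$ factor and a single privacy budget $\epsilon$ suffices) is a welcome sharpening of a step the paper leaves implicit, but it is the same argument.
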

 
\begin{proof}
  From Corollaries \ref{thm:pahse2} and \ref{thm:pahse1} the application of Phase 1 and Phase 2 of POCIN achieves $(\epsilon, \alpha_\ell)$-indistinguishability for locations and $(\epsilon, \alpha_v)$-indistinguishability for values, respectively. 
  Therefore, from Equation~(3) (main article)
  the Phase 1 and 2 of POCIN are location and value indistinguishable and achieve $(\epsilon, \alpha_p, \alpha_v)$-indistinguishability.   Finally, Phase 3 uses exclusively differentially private outputs and public information. Therefore the result follows from post-processing immunity of differential privacy (Theorem \ref{th:postprocessing}).
\end{proof}

The proof of the next theorem is similar to the proof of (Theorem 5) in \cite{fioretto:CPAIOR-18}.

\begin{theorem}
\label{theorem:factorA}
The error induced by POCIN on the CIN node values is bounded by the inequality:
  $
  \| \dot{\bv}^* - \bv \|_2 \leq 2
  \| \tilde{\bv} - \bv \|_2.
  $
where $\dot{\bv}^*$ is the optimal solution to problem $P_{\text{BL}}$.  
\end{theorem}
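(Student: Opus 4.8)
The plan is to read $P_{\text{BL}}$ as a \emph{constrained projection}: its objective $\|\dot{\bv}-\tilde{\bv}\|_2$ asks for the feasible value vector lying closest to the noisy vector $\tilde{\bv}$, so that $\dot{\bv}^*$ is exactly the Euclidean projection of $\tilde{\bv}$ onto the feasible region of the restoration model. The whole bound then rests on a single structural fact — that the \emph{true} value vector $\bv$ is itself a feasible point of $P_{\text{BL}}$. Granting this, optimality of $\dot{\bv}^*$ gives $\|\dot{\bv}^*-\tilde{\bv}\|_2 \le \|\bv-\tilde{\bv}\|_2$, because the minimizer cannot do worse on the objective than the feasible competitor $\bv$, and a single triangle inequality finishes the job:
\[
\|\dot{\bv}^* - \bv\|_2 \;\le\; \|\dot{\bv}^* - \tilde{\bv}\|_2 + \|\tilde{\bv} - \bv\|_2 \;\le\; \|\bv - \tilde{\bv}\|_2 + \|\tilde{\bv} - \bv\|_2 \;=\; 2\,\|\tilde{\bv} - \bv\|_2 .
\]

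The real content, and the step I expect to be the main obstacle, is certifying feasibility of $\bv$. I would instantiate $P_{\text{BL}}$ at $\dot{\bv}=\bv$ and check both requirements. The inner problem \eqref{bi:subproblem} poses no difficulty: evaluated at the value vector $\bv$ it returns an optimal dispatch $\bx^*$ together with some optimal cost. The delicate constraint is the fidelity bound $|\cO(\bx^*,\bv)-\cO^*|\le\beta$, where $\cO^*=\cO^*(\cin)$ is the public optimum of the \emph{original} network. The subtlety is that Phase 3 fixes the \emph{shuffled} locations $\tilde{\bm{\ell}}$, so $\cO(\bx^*,\bv)$ is the optimum of $(\tilde{\bm{\ell}},\bv)$ rather than of $(\bm{\ell},\bv)$; consequently the slack is not automatically zero, and one must invoke the standing well-posedness of the restoration model — namely that the original data, paired with the obfuscated locations, still admits a solution within the chosen tolerance $\beta$ (this is precisely the "existence of a solution (vector $\bv$)" premise that the relaxation discussion appeals to). I would state this as the hypothesis under which $\bv$ lies in the feasible set, and note that $\beta$ is a design parameter chosen to guarantee exactly this.

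With feasibility of $\bv$ in hand, I would assemble the two ingredients: (i) $\bv$ feasible, hence a valid competitor in the minimization \eqref{bi:obj}, and (ii) $\dot{\bv}^*$ a minimizer, giving $\|\dot{\bv}^*-\tilde{\bv}\|_2\le\|\bv-\tilde{\bv}\|_2$; substituting into the displayed triangle inequality yields the claimed factor-$2$ bound. This mirrors the projection-plus-triangle-inequality argument of \cite{fioretto:CPAIOR-18}. I would also remark, as a sanity check on the constant, that the factor of $2$ is the generic bound for projecting onto an arbitrary feasible set, and that it tightens to $1$ precisely when the set is convex (Theorem~\ref{theorem:factor}), where the obtuse-angle property of projections onto convex sets replaces the crude triangle-inequality estimate.
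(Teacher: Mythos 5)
Your proposal is correct and follows essentially the same route as the paper's own proof, which likewise rests on exactly the three ingredients you assemble: feasibility of $\bv$ for $P_{\text{BL}}$, optimality of $\dot{\bv}^*$ (so $\|\dot{\bv}^*-\tilde{\bv}\|_2\le\|\bv-\tilde{\bv}\|_2$), and one application of the triangle inequality. The only difference is that the paper simply asserts the feasibility of $\bv$ as a fact, whereas you correctly flag that, with the shuffled locations $\tilde{\bm{\ell}}$ fixed in Phase 3, this feasibility is really a standing well-posedness hypothesis on $\beta$ --- a worthwhile clarification, not a deviation in method.
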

\begin{proof}
The results follows from the fact that $\bv$ is feasible for the lower-level problem, the optimality of $\dot{\bv}^*$ for $P_{\text{BL}}$, and the triangle inequality on norms.
\end{proof}

\begin{theorem}
  When $P_{\text{CL}}$ is used for fidelity restoration and is convex, the error induced by POCIN on the CIN node values is bounded by the inequality: 
  $$
  \| \dot{\bv}^* - \bv \|_2 \leq \| \tilde{\bv} - \bv \|_2.
  $$
  where $\dot{\bv}^*$ is the solution to the problem $P_{\text{CL}}$.
\end{theorem}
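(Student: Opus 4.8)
The plan is to recognize the minimizer $\dot{\bv}^*$ as the Euclidean projection of $\tilde{\bv}$ onto the feasible set of $P_{\text{CL}}$ and then invoke the obtuse-angle (variational) characterization of projections onto convex sets. First I would eliminate the auxiliary variable $\bx$ and work with the induced feasible region in value space,
$$
\cF = \{\, \dot{\bv} : \exists\, \bx \text{ with } |\cO(\bx,\dot{\bv}) - \cO^*| \leq \beta \ \text{and}\ g_i(\bx,\dot{\bv}) \leq 0\ \forall i \in [k]\,\}.
$$
Under the convexity hypothesis this set is convex, and since objective \eqref{ni:obj} minimizes $\|\dot{\bv} - \tilde{\bv}\|_2$ over $\dot{\bv} \in \cF$, the minimizer is exactly the projection $\dot{\bv}^* = \Pi_{\cF}(\tilde{\bv})$.

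The second step is to note that the true value vector $\bv$ is itself feasible, i.e.\ $\bv \in \cF$ — this is the ``existence of a solution'' invoked in the main-text sketch and mirrors the feasibility of $\bv$ already used in Theorem \ref{theorem:factorA} (the original optimal dispatch certifies that the fidelity constraint holds with zero slack, since $|\cO^* - \cO^*| = 0 \leq \beta$). With both $\bv \in \cF$ and $\dot{\bv}^* = \Pi_{\cF}(\tilde{\bv})$, the angle property of the projection onto a convex set yields
$$
\langle\, \tilde{\bv} - \dot{\bv}^*,\ \bv - \dot{\bv}^*\,\rangle \leq 0 .
$$

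Finally I would expand the squared distance as $\|\tilde{\bv} - \bv\|_2^2 = \|\tilde{\bv} - \dot{\bv}^*\|_2^2 - 2\langle \tilde{\bv} - \dot{\bv}^*,\, \bv - \dot{\bv}^*\rangle + \|\dot{\bv}^* - \bv\|_2^2$. The first term is nonnegative and the cross term is nonnegative by the angle inequality above, so $\|\tilde{\bv} - \bv\|_2^2 \geq \|\dot{\bv}^* - \bv\|_2^2$, which is the claim after taking square roots. This is the same non-expansiveness argument underlying Theorem \ref{theorem:factorA}, but the constant sharpens from $2$ to $1$ precisely because convexity lets us exploit the obtuse angle rather than merely the feasibility of $\bv$ together with the triangle inequality.

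The main obstacle is the very first step: justifying that eliminating $\bx$ leaves a \emph{convex} set $\cF$. The fidelity constraint $|\cO(\bx,\dot{\bv}) - \cO^*| \leq \beta$ conceals the reverse inequality $\cO(\bx,\dot{\bv}) \geq \cO^* - \beta$, which is non-convex when $\cO$ is convex; hence the theorem's phrase ``$P_{\text{CL}}$ is convex'' must be read as the standing assumption that $\cF$ is in fact convex (for instance, when only the upper fidelity bound binds). I would state this reading explicitly before invoking the projection machinery; once convexity of $\cF$ is granted, the remainder is routine inner-product algebra.
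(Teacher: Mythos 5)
Your proposal follows essentially the same route as the paper's proof: both recast $P_{\text{CL}}$ as the Euclidean projection of $\tilde{\bv}$ onto the convex feasible set in value space, use the feasibility of $\bv$ to guarantee that set is non-empty, and close with the obtuse-angle (variational) characterization of the projection. If anything, your explicit squared-norm expansion is a cleaner finish than the paper's appeal to ``the triangle inequality on norms,'' and your caveat about the reverse inequality hidden in $|\cO(\bx,\dot{\bv})-\cO^*|\leq\beta$ is a legitimate observation the paper leaves implicit.
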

  \begin{proof}
  Let $\CC_P \in \RR^n$ be the feasible set of $P_{\text{CL}}$, i.e.,
  $\CC_P = \{ \bv | \exists x \mbox{ such that } (n2) - (n3) \text{
    hold} \}$.  By definition, $\CC_P$ is closed, convex, and
  non-empty, since $\bv$ is a feasible solution.  Since $\bv$ is the original vector, then $\bv \in
  \CC_P$.  The addition of noise to $\bv$ produces a new vector
  $\tilde{\bv} \in \RR^n$ that may or may not satisfy the problem
  constraints.
\begin{itemize}
  \item Case 1: $\tilde{\bv} \in \CC_P$.  That is, $\tilde{\bv}$
    satisfies the constraints of $P_{\text{CL}}$.  The minimizer for
    $P_{\text{CL}}$ is thus the vector $\dot{\bv}^* = \tilde{\bv}$
    with objective value 0. Therefore:
  $$
  \| \dot{\bv}^* - \bv \|_2 = \| \tilde{\bv} - \bv \|_2.
  $$

\item Case 2: $\tilde{\bv} \not\in \CC_P$, i.e., $\tilde{\bv}$ lies outside the feasible region $\CC_P$.
    Problem $P_{\text{CL}}$ can be seen as a \emph{projection problem} onto a convex set, i.e.,
  $$
    \text{Proj}_{\CC_p}(x) = \argmin_{v \in \CC_p} \| x - v\|_2.
    $$ The variational characterization of projection implies that the
    solution $\dot{\bv}^*$ satisfies $\langle \bv -
    \dot{\bv}^*,\tilde{\bv} - \dot{\bv}^*\rangle \leq 0$ and the
    result follows from the triangle inequality on norms.
\end{itemize}
\end{proof}

\end{document}